\newcommand{\shortcite}[1]{\cite{#1}}
\newcommand{\BspPoint}[1]{\filldraw #1 circle(0.04cm)}
\newcommand{\TargetPoint}[1]{\draw #1 circle(0.04cm)}
\newtheorem{Theorem}{Theorem}
\newtheorem{AppendixTheorem}{Theorem}
\newcommand{\Conj}{\textsc{Conj}}
\DeclareMathOperator{\qry}{\mathit{?-}}
\DeclareMathOperator{\qryP}{\qry\limits_{p}}
\newcommand{\pcase}[1]{\paragraph{Case \texttt{#1}}}
\newcommand{\true}{\mathit{true}}
\newcommand{\fail}{\mathit{fail}}
\newcommand{\conj}{\mathit{conj}}
\newcommand{\failure}{\mathit{failure}}
\newcommand{\authcount}[1]{} 
\begin{document}

\title{Disjunctive Delimited Control}

\author{Alexander Vandenbroucke\inst{1} \and Tom Schrijvers\inst{2}}
\authorrunning{A. Vandenbroucke \and T. Schrijvers}
\institute{
  Standard Chartered \email{alexander.vandenbroucke@sc.com}
  \and
  KU Leuven \email{tom.schrijvers@kuleuven.be}
}

\label{firstpage}

\maketitle

\begin{abstract}
  Delimited control is a powerful mechanism for programming language extension
  which has been recently proposed for Prolog (and implemented in SWI-Prolog).
  By manipulating the control flow of a program from inside the language, it
  enables the implementation of powerful features, such as tabling, without
  modifying the internals of the Prolog engine.
  However, its current formulation is inadequate: it does not
  capture Prolog's unique non-deterministic nature which allows multiple ways to
  satisfy a goal.

  This paper fully embraces Prolog's non-determinism with a novel interface for
  \emph{disjunctive} delimited control, which gives the programmer not only control over
  the sequential (conjunctive) control flow, but also over the non-deterministic
  control flow. 
  We provide a meta-interpreter that conservatively extends Prolog with delimited
  control and show that it enables a range of typical Prolog features and
  extensions, now at the library level: findall, cut, branch-and-bound
  optimisation, probabilistic programming, \ldots
\end{abstract}

\begin{keywords}
  delimited control, disjunctions, Prolog, meta-interpreter, branch-and-bound
\end{keywords}

\section{Introduction}

Delimited control is a powerful programming language mechanism for control flow
manipulation that was developed in the late '80s in the context of functional
programming~\cite{Felleisen:1988,abstracting_control}. Schrijvers et
al.~\shortcite{iclp2013} have recently ported this mechanism to Prolog.

This port enables powerful applications in Prolog, such as high-level
implementations of both tabling~\cite{iclp2015} and algebraic effects \&
handlers~\cite{iclp2016}.  Yet, at the same time, their work leaves much untapped
potential, as it fails to recognise the unique nature of Prolog when compared to
functional and imperative languages that have previously adopted delimited
control.

Indeed, computations in other languages have only one \emph{continuation},
i.e., one way to proceed from the current point to a result. In contrast, at
any point in a Prolog continuation, there may be multiple ways to proceed and
obtain a result. More specifically, we can distinguish 1) the success or
\emph{conjunctive} continuation which proceeds with the current state of the
continuation; and 2) the failure or \emph{disjunctive} continuation which
bundles the alternative ways to proceed, e.g., if the conjunctive continuation
fails.

The original delimited control only accounts for one continuation, which
Schrijvers et al. have unified with Prolog's conjunctive continuation. More
specifically, for a given subcomputation, they allow to wrest the current
conjunctive continuation from its track, and to resume it at leisure,
however many times as desired.
Yet, this entirely ignores the disjunctive continuation, which remains
as and where it is.

In this work, we adapt delimited control to embrace the whole of Prolog and
capture both the conjunctive and the disjunctive continuations. This makes it
possible to manipulate Prolog's built-in search for custom search strategies
and enables clean implementations of, e.g., \texttt{findall/3} and
branch-and-bound.
This new version of delimited control has an executable specification in the
form of a meta-interpreter (Section~\ref{sec:meta-interpreter}), that can
run both the above examples, amongst others.

\section{Overview and Motivation}

\subsection{Background: Conjunctive Delimited Control}
\label{sec:bg-conj}

In earlier work, Schrijvers et al.~\shortcite{iclp2013} have introduced a
Prolog-compatible interface for delimited control that consists of two
predicates: \texttt{reset/3} and \texttt{shift/1}.  To paraphrase their
original description, \texttt{reset(Goal,ShiftTerm,Cont)} executes
\texttt{Goal}, and,
\begin{inparaenum}
  \item if \texttt{Goal} fails, \texttt{reset/3} also fails;
  \item if \texttt{Goal} succeeds, then \texttt{reset/3} also succeeds and
    unifies \texttt{Cont} and \texttt{ShiftTerm} with \texttt{0};
  \item if \texttt{Goal} calls \texttt{shift(Term)}, then the execution of
    \texttt{Goal} is suspended and \texttt{reset/3} succeeds immediately, 
    unifying \texttt{ShiftTerm} with \texttt{Term} and \texttt{Cont} with the
    remainder of \texttt{Goal}.
\end{inparaenum}
The \texttt{shift/reset} pair resembles the more familiar
\texttt{catch/throw} predicates, with the following differences:
\texttt{shift/1} does not copy its argument (i.e., it does not refresh the
variables), it does not delete choice points, and 
also communicates the remainder of \texttt{Goal} to \texttt{reset/3}.

\paragraph{Obliviousness to Disjunctions}

This form of delimited control only captures the conjunctive continuation. For
instance \texttt{reset((shift(a),G1),Term,Cont)} captures in \texttt{Cont} goal
\texttt{G1} that appears in conjunction to \texttt{shift(a)}. In a low-level
operational sense this corresponds to delimited control in other (imperative
and functional) languages where the only possible continuation to capture is
the computation that comes sequentially after the shift. 
Thus this approach is
very useful for enabling conventional applications of delimited control in
Prolog.

In functional and imperative languages delimited control can also be
characterised at a more conceptual level as capturing the entire remainder of a
computation. Indeed, in those languages the sequential
continuation coincides with the entire remainder of a computation. Yet, the
existing Prolog approach fails to capture the entire remainder of a goal, as it
only captures the conjunctive continuation and ignores any disjunctions.
This can be illustrated by the \texttt{reset((shift(a),G1;G2),Term,Cont)} which
only captures the conjunctive continuation \texttt{G1} in \texttt{Cont} 
and not the disjunctive continuation \texttt{G2}. In other words, only the
conjunctive part of the goal's remainder is captured.

This is a pity because disjunctions are a key feature of Prolog and many
advanced manipulations of Prolog's control flow involve manipulating those
disjunctions in one way or another.

%
%
%
%
%
%

\subsection{Delimited Continuations with Disjunction}

This paper presents an approach to delimited control for Prolog that is in line
with the conceptual view that the whole remainder of a goal should be captured,
including in particular the disjunctive continuation.

For this purpose we modify the \texttt{reset/3} interface, where
depending on \texttt{Goal},\\
\texttt{reset(Pattern,Goal,Result)} has three possible outcomes:
\begin{enumerate}
\item
If \texttt{Goal} fails, then the \texttt{reset} succeeds and unifies
\texttt{Result} with \texttt{failure}. For instance,
\begin{Verbatim}[frame=single]
  ?- reset(_,fail,Result).
  Result = failure.
\end{Verbatim}
\item
If \texttt{Goal} succeeds, then 
\texttt{Result} is unified with \texttt{success(PatternCopy,}\\
\texttt{DisjCont)} and the \texttt{reset} succeeds.
Here \texttt{DisjCont} is a goal that represents the disjunctive remainder of
\texttt{Goal}. For instance,
\begin{Verbatim}[frame=single]
  ?- reset(X,(X = a; X = b),Result).
  X = a, Result = success(Y,Y = b).
\end{Verbatim}
Observe that, similar to \texttt{findall/3}, the logical variables in
\texttt{DisjCont} have been renamed apart to avoid interference between the
branches of the computation. To be able to identify any variables of interest
after renaming, we provide \texttt{PatternCopy} as a likewise renamed-apart
copy of \texttt{Pattern}. 
\item 
If \texttt{Goal} calls \texttt{shift(Term)}, then the \texttt{reset} succeeds and
\texttt{Result} is unified with \texttt{shift(Term,ConjCont,PatternCopy,DisjCont)}.
This contains in addition to the disjunctive continuation also the conjunctive
continuation. The latter is not renamed apart and can share variables
with \texttt{Pattern} and \texttt{Term}. For instance,
\begin{Verbatim}[frame=single]
  ?- reset(X,(shift(t),X = a; X = b),Result).
  Result = shift(t,X = a, Y, Y = b).
\end{Verbatim}
\end{enumerate}
Note that \texttt{reset(P,G,R)} always succeeds if \texttt{R} is unbound and
never leaves choicepoints.

\paragraph{Encoding \texttt{findall/3}}

Section~\ref{sec:case-studies} presents a few larger applications, but our
encoding of
\texttt{findall/3} with disjunctive delimited control already gives some idea
of the expressive power:
\begin{Verbatim}[frame=single]
findall(Pattern,Goal,List) :-
  reset(Pattern,Goal,Result),
  findall_result(Result,Pattern,List).

findall_result(failure,_,[]).
findall_result(success(PatternCopy,DisjCont),Pattern,List) :-
  List = [Pattern|Tail],
  findall(PatternCopy,DisjCont,Tail).
\end{Verbatim}
This encoding is structured around a \texttt{reset/3} call of the given
\texttt{Goal} followed by a case analysis of the result. Here we assume that
\texttt{shift/1} is not called in \texttt{Goal}, which is a reasonable
assumption for plain \texttt{findall/3}.

\paragraph{Encoding \texttt{!/0}}
Our encoding of the \texttt{!/0} operator illustrates the use of
\texttt{shift/1}:
\begin{Verbatim}[frame=single]
cut :- shift(cut).  

scope(Goal) :-
  copy_term(Goal,Copy),
  reset(Copy,Copy,Result),
  scope_result(Result,Goal,Copy).

scope_result(failure,_,_) :- 
  fail.
scope_result(success(DisjCopy,DisjGoal),Goal,Copy) :- 
  Goal = Copy.
scope_result(success(DisjCopy,DisjGoal),Goal,Copy) :- 
  DisjCopy = Goal,
  scope(DisjGoal).
scope_result(shift(cut,ConjGoal,DisjCopy,DisjGoal),Goal,Copy) :- 
  Copy = Goal,
  scope(ConjGoal).
\end{Verbatim}
The encoding provides \texttt{cut/0} as a substitute for \texttt{!/0}. Where
the scope of regular cut is determined lexically, we use \texttt{scope/1} here
to define it dynamically. For instance, we encode
\begin{center}
\begin{tabular}{ccc}
\begin{minipage}{0.47\textwidth}
\begin{Verbatim}[frame=single]
p(X,Y) :- q(X), !, r(Y).              
p(4,2).

\end{Verbatim}
\end{minipage}
&
as
&
\begin{minipage}{0.47\textwidth}
\begin{Verbatim}[frame=single]
p(X,Y) :- scope(p_aux(X,Y)).
p_aux(X,Y) :- q(X), cut, r(Y).           
p_aux(4,2).
\end{Verbatim}
\end{minipage}
\end{tabular}
\end{center}
The logic of cut is captured in the definition of \texttt{scope/1}; all the
\texttt{cut/0} predicate does is request the execution of a cut with
\texttt{shift/1}.

In \texttt{scope/1}, the \texttt{Goal} is copied to avoid instantiation by any
of the branches. The copied goal is executed inside a \texttt{reset/3} with the
copied goal itself as the pattern. The \texttt{scope\_result/3} predicate
handles the result: 
\begin{itemize}
\item \texttt{failure} propagates with \texttt{fail};
\item \texttt{success} creates a disjunction to either unify the initial goal with the now instantiated copy
      to propagate bindings, or to invoke the disjunctive continuation;
\item \texttt{shift(cut)} discards the disjunctive continuation and proceeds
      with the conjunctive continuation only.
\end{itemize}

\section{Meta-Interpreter Semantics}
\label{sec:meta-interpreter}

We provide an accessible definition of disjunctive delimited control in
the form of a meta-interpreter.
Broadly speaking, it consists of two parts: the core interpreter, and a top
level predicate to initialise the core and interpret the results.

\subsection{Core Interpreter} 
\begin{figure}[htb!]
\begin{Verbatim}[frame=single,numbers=left]
eval([],PatIn,Disj,PatOut,Result) :- !,
    PatOut = PatIn,
    Disj       = alt(BranchPatIn,BranchGoal),
    Result     = success(BranchPatIn,BranchGoal).
eval([true|Conj],PatIn,Disj,PatOut,Result) :- !,
    eval(Conj,PatIn,Disj,PatOut,Result).
eval([(G1,G2)|Conj],PatIn,Disj,PatOut,Result) :- !,
    eval([G1,G2|Conj],PatIn,Disj,PatOut,Result).
eval([fail|_Conj],_,Disj,PatOut,Result) :- !,
    backtrack(Disj,PatOut,Result).
eval([(G1;G2)|Conj],PatIn,Disj,PatOut,Result) :- !,
    copy_term(alt(PatIn,conj([G2|Conj])),Branch),
    disjoin(Branch,Disj,NewDisj),
    eval([G1|Conj],PatIn,NewDisj,PatOut,Result).
eval([conj(Cs)|Conj],PatIn,Disj,PatOut,Result) :- !,
    append(Cs,Conj,NewConj),
    eval(NewConj,PatIn,Disj,PatOut,Result).
eval([shift(Term)|Conj],PatIn,Disj,PatOut,Result) :- !,
    PatOut = PatIn,
    Disj       = alt(BranchPatIn,Branch),
    Result     = shift(Term,conj(Conj),BranchPatIn,Branch).
eval([reset(RPattern,RGoal,RResult)|Conj],PatIn,Disj,PatOut,Result):- !,
    copy_term(RPattern-RGoal,RPatIn-RGoalCopy),
    empty_alt(RDisj),
    eval([RGoalCopy],RPatIn,RDisj,RPatOut,RResultFresh),
    eval([RPattern=RPatOut,RResult=RResultFresh|Conj]
         ,PatIn,Disj,PatOut,Result).
eval([Call|Conj],PatIn,Disj,PatOut,Result) :- !,
    findall(Call-Body,clause(Call,Body), Clauses),
    ( Clauses = [] -> backtrack(Disj,PatOut,Result)
    ; disjoin_clauses(Call,Clauses,ClausesDisj),
      eval([ClausesDisj|Conj],PatIn,Disj,PatOut,Result)
    ).
\end{Verbatim}
\caption{Meta-Interpreter Core}\label{fig:eval}
\end{figure}

Figure~\ref{fig:eval} defines the interpreter's core predicate,
\texttt{eval(Conj, PatIn, Disj,\\PatOut, Result)}. 
It captures the behaviour of \texttt{reset(Pattern,Goal,Result)} where the goal
is given in the form of a list of goals, \texttt{Conj}, together
with the alternative branches, \texttt{Disj}. The latter is
renamed apart from \texttt{Conj} to avoid conflicting
instantiations.

The pattern that identifies the variables of interest (similar to
\texttt{findall/3}) is present in three forms. Firstly, \texttt{PatIn} is
an input argument that shares the variables of interest with \texttt{Conj} (but
not with \texttt{Disj}). Secondly, \texttt{PatOut} outputs the instantiated
pattern when the goal succeeds or suspends on a \texttt{shift/1}. Thirdly, the
alternative branches \texttt{Disj} are of the form
\texttt{alt(BranchPatIn,BranchGoal)} with their own copy of the pattern.

When the conjunction is empty (1--4), the output pattern is unified with the
input pattern, and \texttt{success/2} is populated with the information from
the alternative branches.

When the first conjunct is \texttt{true/0} (5--6), it is dropped and the meta-interpreter
proceeds with the remainder of the conjunction. When it is a
composite conjunction \texttt{(G1,G2)} (7--8), the individual components are added separately to the
list of conjunctions. 

When the first conjunct is \texttt{fail/0} (9--10), the meta-interpreter
backtracks explicitly by means of auxiliary predicate \texttt{backtrack/3}.
\begin{Verbatim}[frame=single]
backtrack(Disj,PatOut,Result) :-
    ( empty_alt(Disj) ->
       Result = failure
    ; Disj = alt(BranchPatIn,BranchGoal) ->
       empty_alt(EmptyDisj),
       eval([BranchGoal],BranchPatIn,EmptyDisj,PatOut,Result)
    ).

empty_alt(alt(_,fail)).
\end{Verbatim}
If there is no alternative branch, it sets the \texttt{Result} to
\texttt{failure}.
Otherwise, it resumes with the alternative branch.
Note that by managing its own backtracking, \texttt{eval/5} is entirely
deterministic with respect to the meta-level Prolog system.

When the first conjunct is a disjunction \texttt{(G1;G2)} (11--14), the
meta-interpreter adds (a renamed apart copy of) \texttt{(G2,Conj)} to the
alternative branches with \texttt{disjoin/3} and proceeds with
\texttt{[G1|Conj]}.
\begin{Verbatim}[frame=single]
disjoin(alt(_,fail),Disjunction,Disjunction) :- !.
disjoin(Disjunction,alt(_,fail),Disjunction) :- !.
disjoin(alt(P1,G1),alt(P2,G2),Disjunction) :-
    Disjunction = alt(P3, (P1 = P3, G1 ; P2 = P3, G2)).
\end{Verbatim}
Note that we have introduced a custom built-in \texttt{conj(Conj)}  that turns
a list of goals into an actual conjunction. It is handled (15--17) by
prepending the goals to the current list of conjuncts, and never actually builds
the explicit conjunction.

When the first goal is \texttt{shift(Term)} (18--21), this is handled similarly
to an empty conjunction, except that the result is a \texttt{shift/4} term
which contains \texttt{Term} and the remainder of the conjunction in addition
the branch information.

When the first goal is a \texttt{reset(RPattern,RGoal,RResult)} (22--27), the
meta-interpreter sets up an isolated call to \texttt{eval/5} for this goal.
When the call returns, the meta-interpreter passes on the results and resumes
the current conjunction \texttt{Conj}. Notice that we are careful
that this does not result in meta-level failure by meta-interpreting the
unification.

Finally, when the first goal is a call to a user-defined predicate (28--33), the
meta-interpreter collects the bodies of the predicate's clauses whose head
unifies with the call. If there are none, it backtracks explicitly. Otherwise,
it builds an explicit disjunction with \texttt{disjoin\_clauses}, which it
pushes on the conjunction stack.
\begin{Verbatim}[frame=single]
disjoin_clauses(_G,[],fail) :- !.
disjoin_clauses(G,[GC-Clause],(G=GC,Clause)) :- !.
disjoin_clauses(G,[GC-Clause|Clauses], ((G=GC,Clause) ; Disj)) :-
    disjoin_clauses(G,Clauses,Disj).
\end{Verbatim}
An example execution trace of the interpreter can be found in
\ref{sec:example-trace}.

\paragraph{Toplevel} The \texttt{toplevel(Goal)}-predicate initialises
the core interpreter with a conjunction containing only the given goal, the
pattern and pattern copy set to (distinct) copies of the goal, and an empty
disjunction.  It interprets the result by non-deterministically producing all
the answers to \texttt{Goal} and signalling an error for any unhandled
\texttt{shift/1}.
\begin{Verbatim}[frame=single]
toplevel(Goal) :-
    copy_term(Goal,GoalCopy),
    PatIn = GoalCopy,
    empty_alt(Disj),
    eval([GoalCopy],PatIn,Disj,PatOut,Result),
    ( Result = success(BranchPatIn,Branch) ->
        ( Goal = PatOut ; Goal = BranchPatIn, toplevel(Branch))
    ; Result = shift(_,_,_,_) ->
        write('toplevel: uncaught shift/1.\n'), fail
    ; Result = failure ->
        fail
    ).
\end{Verbatim}

\section{Case Studies}
\label{sec:case-studies}

To illustrate the usefulness and practicality of our approach, we present two
case studies that use the new \texttt{reset/3} and \texttt{shift/1}.

\subsection{Branch-and-Bound: Nearest Neighbour Search}

\begin{figure}[tb!]
  \centering
\begin{minipage}{0.95\textwidth}  
\begin{Verbatim}[frame=single]
bound(V) :- shift(V).
  
bb(Value,Data,Goal,Min) :-
    reset(Data,Goal,Result),
    bb_result(Result,Value,Data,Min).

bb_result(success(BranchCopy,Branch),Value,Data,Min) :-
  ( Data @< Value -> bb(Data,BranchCopy,Branch,Min)
  ; bb(Value,BranchCopy,Branch,Min)
  ).
bb_result(shift(ShiftTerm,Cont,BranchCopy,Branch),Value,Data,Min) :-
  (  ShiftTerm @< Value ->
     bb(Value,Data,(Cont ; (BranchCopy = Data,Branch)),Min)
  ;  bb(Value,BranchCopy,Branch,Min)
  ).
bb_result(failure,Value,_Data,Min) :- Value = Min.
\end{Verbatim}
\end{minipage}
\caption{Branch-and-Bound Effect Handler.}
\label{fig:branch-and-bound}
\end{figure}

Branch-and-bound is a well-known general optimisation strategy, where the
solutions in certain areas or branches of the search space are known to be
bounded.
Such branches can be pruned, when their bound does not improve upon a
previously found solution, eliminating large swaths of the search space in a
single stroke.

We provide an implementation of branch-and-bound (see
Figure~\ref{fig:branch-and-bound}) that is generic, i.e., it is not specialised
for any application.
In particular it is not specific to nearest neighbour search, the problem
on which we demonstrate the branch-and-bound approach here.

The framework requires minimal instrumentation: it suffices to begin every
prunable branch with \texttt{bound(V)}, where \texttt{V} is a lower
bound on the values in the branch.\footnote{The framework searches for a
  minimal solution.}
\begin{enumerate}
\item If the \texttt{Goal} succeeds normally (i.e., \texttt{Result} is
  \texttt{success}), then \texttt{Data} contains a new solution, which is only
  accepted if it is an improvement over the existing \texttt{Value}.
  The handler then tries the next \texttt{Branch}.
\item If the \texttt{Goal} calls \texttt{bound(V)}, \texttt{V} is compared
  to the current best \texttt{Value}:
  \begin{itemize}
  \item if it is less than the current value, then \texttt{Cont} could
    produce a solution that improves upon the current value, and thus must
    be explored.
    The alternative \texttt{Branch} is disjoined to \texttt{Cont}, and
    \texttt{DataCopy} is restored to \texttt{Data} (ensuring that a future
    \texttt{reset/3} copies the right variables);
  \item if it is larger than or equal to the current value, then \texttt{Cont}
    can be safely discarded.
  \end{itemize}
\item Finally, if the goal fails entirely, \texttt{Min} is the current
  minimum \texttt{Value}.

\end{enumerate}


\begin{figure}[tb!]
\begin{minipage}{\textwidth}
\begin{Verbatim}[frame=single]
nn((X,Y),BSP,D-(NX,NY)) :-
    ( BSP = xsplit((SX,SY),Left,Right) ->
        DX is X - SX, 
        branch((X,Y), (SX,SY), Left, Right, DX, D-(NX,NY))
    ; BSP = ysplit((SX,SY),Up,Down) ->
        DY is Y - SY, 
        branch((X,Y), (SX,SY), Up, Down, DY, D-(NX,NY))
    ).
branch((X,Y), (SX,SY), BSP1, BSP2, D, Dist-(NX,NY)) :-
    ( D < 0 -> % Find out which partition contains (X,Y).
        TargetPart = BSP1, OtherPart = BSP2, BoundaryDistance is -D
    ;  
        TargetPart = BSP2, OtherPart = BSP1, BoundaryDistance is D
    ),
    ( nn((X,Y), TargetPart, Dist-(NX,NY))
    ; Dist is (X - SX) * (X - SX) + (Y - SY) * (Y - SY),
      (NX,NY) = (SX,SY)
    ; bound(BoundaryDistance-nil),
      nn((X,Y), OtherPart,Dist-(NX,NY))
    ).
run_nn((X0,Y0),BSP,(NX,NY)) :-
    toplevel(bb(10-nil,D-(X,Y),nn((X0,Y0),BSP,D-(X,Y)),_-(NX,NY))).
\end{Verbatim}
\end{minipage}
\caption{2D Nearest Neighbour Search with Branch-and-Bound.}
\label{fig:nearest-neighbour}
\end{figure}
\paragraph{Nearest Neighbour Search} The code in
Figure~\ref{fig:nearest-neighbour} shows how the branch and bound
framework efficiently solves the problem of finding the point (in a given
set) that is nearest to a given target point on the Euclidean plane.

The \texttt{run\_nn/3} predicate takes a point \texttt{(X,Y)},
a Binary Space Partitioning (BSP)-tree\footnote{A BSP-tree is a tree that
  recursively partitions a set of points on the Euclidean plane, by picking
  points and alternately splitting the plane along the x- or y-coordinate of
  those point. Splitting along the x-coordinate produces an \texttt{xsplit/3}
  node, along the y-coordinate produces a \texttt{ysplit/3} node.}
that represents the set of points, and returns the point, nearest to
\texttt{(X,Y)}.
The algorithm implemented by \texttt{nn/3} recursively descends the BSP-tree.
At each node it first tries the partition to which the target point belongs,
then the point in the node, and finally the other partition.
For this final step we can give an easy lower bound: any point in the
other partition must be at least as far away as the (perpendicular) distance
from the given point to the partition boundary.

\begin{figure}[tb!]
  \centering
  \begin{tikzpicture}[scale=2.2]
    \draw[pattern=north west lines, pattern color=black,opacity=0.5]
      (-1,-1) -- (0,-1) -- (0,1) -- (-1,1) -- cycle;
    \draw (-1,-1) -- (1,-1) -- (1,1) -- (-1,1) -- cycle;
    \TargetPoint{(1,0.1)}   node[anchor=east]  {(1,0.1)};
    \BspPoint{(0,0)}        node[anchor=west]  {(0,0)};
    \BspPoint{(0.5,0.5)}    node[anchor=south] {(0.5,0.5)};
    \BspPoint{(-0.5,0)}     node[anchor=south] {(-0.5,0.5)};
    \BspPoint{(-0.75,-0.5)} node[anchor=west]  {(-0.75,-0.5)};
    \draw (0,-1) -- (0,1);
    \draw (0,0.5) -- (1,0.5);
    \draw (-1,0) -- (0,0);
    \draw (-0.75,0) -- (-0.75,-1);
  \end{tikzpicture}
  \caption{Nearest-Neighbour Search using a BSP-tree}
  \label{fig:bsp}
\end{figure}
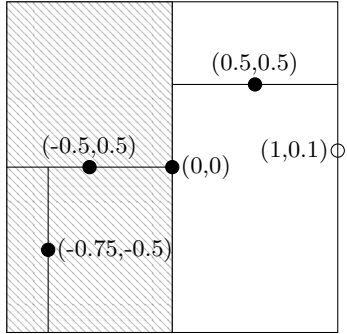

As an example, we search for the point nearest to $(1,0.1)$ in the set
$\{(0.5,0.5),$ $(0,0),$ $(-0.5,0),$ $(-0.75,-0.5)\}$.
Figure~\ref{fig:bsp} shows a BSP-tree containing these points,
the solid lines demarcate the partitions.
The algorithm visits the points $(0.5,0.5)$ and $(0,0)$, in that order.
The shaded area is never visited, since the distance from (1,0.1) to the
vertical boundary through $(0,0)$ is greater than the distance to $(0.5,0.5)$
(1 and about 0.64).
The corresponding call to \texttt{run\_nn/3} is:
\begin{Verbatim}[frame=single]
  ?- BSP = xsplit((0,0),
            ysplit((-0.5,0),leaf,xsplit((-0.75,-0.5),leaf,leaf)),
            ysplit((0.5,0.5),leaf,leaf)),
     run_nn((1,0.1),BSP,(NX,NY)).
  NX = NY, NY = 0.5.
\end{Verbatim}

\subsection{Probabilistic Programming}
Probabilistic programming languages (PPLs) are programming languages 
designed for probabilistic modelling.
In a probabilistic model, components behave in a variety of ways---just like in a
non-deterministic model---but do so with a certain probability.

Instead of a single deterministic value, the execution of a probabilistic
program results in a probability distribution of a set of values.
This result is produced by probabilistic
\emph{inference}~\cite{DBLP:conf/aistats/WoodMM14,DBLP:journals/tplp/FierensBRSGTJR15},
for which there are many strategies
and algorithms, the discussion of which is out of scope here.
Here, we focus on one concrete probabilistic \emph{logic} programming
languages: PRISM~\cite{DBLP:conf/iclp/Sato09}.

A PRISM program consists of Horn clauses, and in fact,
looks just like a regular Prolog program.
However, we distinguish two special predicates:
\vspace{-1mm}
\begin{itemize}
\item
  \texttt{values\_x(Switch,Values,Probabilities)} This predicate defines a
  probabilistic switch \texttt{Switch}, that can assume a value from
  \texttt{Values} with the probability that is given at the corresponding
  position in \texttt{Probabilities} (the contents of \texttt{Probabilities}
  should sum to one).
\item \texttt{msw(Switch,Value)} This predicate samples a value
  \texttt{Value} from a switch \texttt{Switch}.
  For instance, if the program contains a switch declared as
  \texttt{values\_x( coin, [h,t], [0.4,0.6])}, then \texttt{msw(coin,V)}
  assigns \texttt{h} (for heads) to \texttt{V} with probability 0.4, and
  \texttt{t} (for tails) with probability 0.6.
  Remark that each distinct call to \texttt{msw} leads to a different sample
  from that switch.
  For instance, in the query \texttt{msw(coin,X),msw(coin,Y)}, the outcome
  could be either \texttt{(h,h)},\texttt{(t,t)}, \texttt{(h,t)} or
  \texttt{(t,h)}.
\end{itemize}
Consider the following PRISM program, the running example for this section:
\begin{Verbatim}[frame=single]
  values_x(coin1,[h,t],[0.5,0.5]).
  values_x(coin2,[h,t],[0.4,0.6]).
  twoheads :- msw(coin1,h),msw(coin2,h).
  onehead :- msw(coin1,V), (V = t, msw(coin2,h) ; V = h).
\end{Verbatim}
This example defines two predicates: \texttt{twoheads} which is true if both
coins are heads, and \texttt{onehead} which is true if either coin is heads.
However, note the special structure of \texttt{onehead}: PRISM requires the
\emph{exclusiveness condition}, that is, branches of a disjunction cannot be
both satisfied at the same time.
The simpler goal \texttt{msw(coin1,heads) ; msw(coin2, heads)} violates this
assumption.

\begin{figure}[tb!]
\begin{Verbatim}[frame=single,numbers=left]
msw(Key,Value) :- shift(msw(Key,Value)).
prob(Goal) :-
    prob(Goal,ProbOut),
    write(Goal), write(': '), write(ProbOut), write('\n').
prob(Goal,ProbOut) :-
    copy_term(Goal,GoalCopy),
    reset(GoalCopy,GoalCopy,Result),
    analyze_prob(GoalCopy,Result,ProbOut).
analyze_prob(_,failure,0.0).
analyze_prob(_,success(_,_),1.0).
analyze_prob(_,shift(msw(K,V),C,_,Branch),ProbOut) :-
    values_x(K,Values,Probabilities),
    msw_prob(V,C,Values,Probabilities,0.0,ProbOfMsw),
    prob(Branch,BranchProb),
    ProbOut is ProbOfMsw + BranchProb.
\end{Verbatim}
\caption{An implementation of probabilistic programming with delimited
  control.}
\label{fig:prob-prog}
\end{figure}

The code in Figure~\ref{fig:prob-prog} interprets this program.
Line 1 defines \texttt{msw/2} as a simple shift.
Lines 6--9 install a \texttt{reset/3} call over the goal, and analyse the
result.
The result is analysed in the remaining lines:
A \emph{failure} never succeeds, and thus has success probability 0.0 (line 9).
Conversely, a successful computation has a success probability of 1.0 (line 10).
Finally, the probability of a switch (lines 11-15) is the sum of the
probability of the remainder of the program given each possible value of the
switch multiplied with the probability of that value, and summed with the
probability of the alternative branch.

The predicate \texttt{msw\_prob} finds the joint probability of all choices.
It iterates over the list of values, and sums the probability of their
continuations.
\begin{Verbatim}[frame=single]
  msw_prob(_,_,[],[],Acc,Acc).
  msw_prob(V,C,[Value|Values],[Prob|Probs],Acc,ProbOfMsw) :-
    prob((V = Value,C),ProbOut),
    msw_prob(V,C,Values,Probs,Prob*ProbOut + Acc,ProbOfMsw).
\end{Verbatim}

Now, we can compute the probabilities of the two predicates above:
\begin{Verbatim}[frame=single]
  ?- toplevel(prob(twoheads)).
  twoheads: 0.25
  ?- toplevel(prob(onehead)).
  onehead: 0.75
\end{Verbatim}
In Appendix~\ref{sec:problog} we show how to implement the semantics a definite,
non-looping fragment of ProbLog~\cite{DBLP:journals/tplp/FierensBRSGTJR15},
another logic PPL, on top of the code in this section.

\section{Properties of the Meta-Interpreter}
In this section we establish two important correctness properties of our
meta-interpreter.
The proofs of these properties are in the corresponding Appendices
\ref{sec:soundness} and \ref{sec:completeness}
The first theorem establishes the soundness of the meta-interpreter, i.e.,
if a program (not containing \texttt{shift/1} or \texttt{reset/3}) evaluates
to success, then an SLD-derivation of the same answer must exist.

\vspace{5mm}

\begin{Theorem}[Soundness]
  For all lists of goals $[A_1,\ldots,A_n]$,
  terms $\alpha,\beta,\gamma,\nu$, variables $P,R$
  conjunctions $B_1,\ldots,B_m$; $C_1,\ldots,C_k$ and substitutions $\theta$,
  if
  \[
    \begin{array}{l}
      ?- \mathit{eval}([A_1,\ldots,A_n],\alpha,\mathit{alt}(\beta,(B_1,\ldots,B_m)),P,R).\\
      P = \nu, R = \mathit{success(\gamma,C_1,\ldots,C_k)}.
    \end{array}
  \]
  and the program contains neither \texttt{shift/1} nor \texttt{reset/3},
  then SLD-resolution\footnote{Standard SLD-resolution, augmented with
    disjunctions and \texttt{conj/1} goals.}
    finds the following derivation:
  \[
    \begin{array}{c}
      \leftarrow (A_1,\ldots,A_n,\true) ; (\alpha = \beta,B_1,\ldots,B_m)\\
      \vdots\\
      \square\\
      \text{(with solution $\theta$ s.t. $\alpha\theta = \nu$)}
    \end{array}
  \]
\end{Theorem}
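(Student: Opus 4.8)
The plan is to proceed by induction on the height of the meta-level computation tree of the successful \texttt{eval/5} call (counting the mutually recursive \texttt{backtrack/3} calls as well), with a case analysis on the shape of the first conjunct $A_1$. To make the induction go through I would first generalise the statement so that the branch goal inside $\mathrm{alt}(\beta,\cdot)$ is allowed to be an arbitrary goal of the augmented language (a conjunction, a disjunction, or a \texttt{conj/1} term) rather than only a flat conjunction $B_1,\ldots,B_m$; this is forced by the \texttt{disjoin/3} case below, which installs a disjunction as a branch goal. Throughout, I associate to every call $\texttt{eval}(\Gamma,\pi,\mathrm{alt}(\delta,\Delta),P,R)$ the SLD goal $(\Gamma,\true);(\pi=\delta,\Delta)$, and I must produce a refutation of it whose computed pattern equals $\nu$. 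The \texttt{shift/1} and \texttt{reset/3} clauses (lines 18--27) are vacuous here: the former contradicts $R=\success(\ldots)$ and the latter is excluded by hypothesis, so only seven clauses remain.

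The base case is the empty conjunction (lines 1--4): here $n=0$, $\nu=\alpha$, and $R=\success(\beta,(B_1,\ldots,B_m))$, and the associated goal $(\true);(\alpha=\beta,B_1,\ldots,B_m)$ is refuted by resolving the leading \texttt{true} in the left disjunct, giving $\theta=\varepsilon$ and $\alpha\theta=\alpha=\nu$. The structural clauses for \texttt{true} (lines 5--6), for $(G_1,G_2)$ (lines 7--8), and for \texttt{conj/1} (lines 15--17) are discharged by applying the induction hypothesis to the single recursive call and prepending one resolution step (resolving \texttt{true}, flattening the nested conjunction, or unfolding \texttt{conj/1} into the augmented-SLD conjunction); associativity of conjunction in the augmented calculus makes the computed pattern agree. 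For a user-defined call (lines 28--33) with at least one matching clause, I apply the induction hypothesis to the recursive call on \texttt{[ClausesDisj|Conj]} and then invoke a lemma stating that a refutation of $(\texttt{disjoin\_clauses}(C,\mathit{Clauses}),\ldots)$ in augmented SLD is exactly a standard SLD step that selects one clause, renames it apart, and unifies its head with $C$; this is the familiar meta-interpreter clause-resolution correspondence and is routine.

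The \texttt{fail} clause (lines 9--10) and the no-matching-clause subcase of the call clause both route through \texttt{backtrack/3}, and here the explicit disjunctive continuation must be shown to coincide with choosing the right disjunct of the associated goal. Because $R=\success(\ldots)$ while the left branch fails, \texttt{backtrack/3} must have resumed $\mathrm{alt}(\beta,(B_1,\ldots,B_m))$, i.e.\ it calls $\texttt{eval}([(B_1,\ldots,B_m)],\beta,\mathrm{alt}(\_,\fail),P,R)$. Applying the induction hypothesis to this call yields a refutation of $((B_1,\ldots,B_m),\true);(\beta=\_,\fail)$; the right disjunct dies on \texttt{fail}, so this is really a refutation of $(B_1,\ldots,B_m)$ with $\beta\theta=\nu$. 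I then build the required refutation of $(\fail,A_2,\ldots,A_n,\true);(\alpha=\beta,B_1,\ldots,B_m)$ by killing the left disjunct on \texttt{fail} and taking the right disjunct: resolving $\alpha=\beta$ links the (disjoint) pattern variables, after which the induction-hypothesis refutation lifts verbatim and $\alpha\theta=\beta\theta=\nu$. I would isolate this as a separate \texttt{backtrack/3} lemma phrased exactly like the theorem.

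The disjunction clause (lines 11--14) is the crux and the main obstacle. It first forms a renamed-apart copy $\mathrm{alt}(\alpha',\conj([G_2|\mathit{Conj}]'))$ of the deferred branch, combines it with the incoming \texttt{Disj} through \texttt{disjoin/3}, and recurses on \texttt{[G1|Conj]} with the resulting $\mathrm{alt}(P_3,(\alpha'=P_3,(G_2,\mathit{Conj})';\ \beta=P_3,(B_1,\ldots,B_m)))$. The induction hypothesis delivers a refutation of a goal whose right disjunct is this three-way structure, and I must fold it back onto the target $((G_1;G_2),\mathit{Conj},\true);(\alpha=\beta,B_1,\ldots,B_m)$ by tracing which sub-branch produced the success: a success through $G_1$ maps to selecting $G_1$ from $(G_1;G_2)$; a success through the $\beta=P_3$ sub-branch maps to the target's right disjunct after observing $\alpha=P_3=\beta$; and a success through the copied $\alpha'=P_3$ sub-branch maps to selecting $G_2$. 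This last arrow is the delicate one, because it runs through the \texttt{copy\_term}: I would prove a renaming lemma stating that an augmented-SLD refutation of a renamed-apart copy, with its pattern re-linked to the original via the chain of $\cdot=P_3$ unifications inserted by \texttt{disjoin/3}, induces a refutation of the original branch with the same pattern instantiation $\alpha\theta=\nu$. Getting the variable bookkeeping of this copy/relink correspondence exactly right---so that the three-way disjunction collapses onto the binary $(G_1;G_2)$ of the target while preserving the computed pattern---is where I expect the real work to lie.
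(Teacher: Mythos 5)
Your proposal follows essentially the same route as the paper's own proof: induction on the meta-level derivation with case analysis on the first conjunct, the same \texttt{backtrack/3} argument (resuming the branch with a fresh empty alternative, discarding the irrefutable $\beta = \delta, \fail$ disjunct, and re-linking patterns through $\alpha = \beta$), the same \texttt{copy\_term/2} correspondence ($\delta\sigma = \alpha\sigma \Rightarrow (D_1,\ldots,D_n)\sigma = (G_2,A_2,\ldots,A_n)\sigma$) to collapse the three-way disjunction built by \texttt{disjoin/3} back onto $(G_1;G_2)$, and the same clause-selection correspondence for user-defined calls. The only differences are presentational: you isolate the backtrack, renaming, and clause-resolution steps as named lemmas and make explicit the generalisation to non-flat branch goals that the paper leaves implicit in letting the $B_i$ range over arbitrary goals of the augmented language.
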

Conversely, we want to argue that the meta-interpreter is complete,
i.e., if SLD-derivation finds a refutation, then
meta-interpretation---provided that it terminates---must find the same
answer eventually.
The theorem is complicated somewhat by the fact that the first answer that
the meta interpreter arrives at might not be the desired one due to the
order of the clauses in the program.
To deal with this problem, we use the operator $\qryP$, which is like $\qry$,
but allows a different permutation of the program in every step.
\begin{Theorem}[Completeness]
  For all goals $\leftarrow A_1,\ldots,A_n$, if has solutiono $\theta$, 
  then
  \[\begin{array}{l}
  \qryP \mathit{eval}([A_1,\ldots,A_n],\alpha,\mathit{alt}(\beta,(B_1,\ldots,B_m)),P,R).\\
  P = \mathit{success}(\gamma,(C_1,\ldots,C_k)),R = \alpha\theta.
  \end{array}\]
\end{Theorem}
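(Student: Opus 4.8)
The plan is to induct on the length $\ell$ of the given augmented SLD-refutation of $\leftarrow A_1,\ldots,A_n$ (the goal being a plain logic-programming goal, with neither \texttt{shift/1} nor \texttt{reset/3}, so that SLD-resolution is its reference semantics), and to simulate that refutation one object-level step at a time by a bounded block of \texttt{eval/5} transitions. Because \texttt{eval/5} is deterministic at the meta-level, it commits to exactly one \texttt{Result}; the entire task is therefore to force that unique result to carry the intended answer $\theta$. This is precisely what $\qryP$ buys us: at each step we may re-order the clauses (and, under a generous reading of ``permutation of the program'', also the branches of an explicit disjunction, which is a solution-preserving local rewrite) so that the alternative taken by the refutation is presented \emph{leftmost}. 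Then \texttt{disjoin\_clauses} and the disjunction clause (lines 11--14) place that alternative first, \texttt{eval} descends into it, and the siblings are pushed, renamed apart, onto \texttt{Disj}.

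First I would fix a simulation invariant tying an SLD state to an \texttt{eval} configuration $(\texttt{Conj},\texttt{PatIn},\texttt{Disj})$: flattening \texttt{Conj} yields the current goal list; \texttt{PatIn} shares variables with \texttt{Conj} and so accumulates exactly the bindings the refutation has made so far (so at the end $\texttt{PatIn} = \alpha\theta$); and \texttt{Disj}, assembled by the \texttt{disjoin/3} calls, is a renamed-apart encoding of precisely the branches lying to the right of the current path in the search tree (with the initial $\mathit{alt}(\beta,(B_1,\ldots,B_m))$ as its rightmost entries). The base case is then immediate: once the refutation reaches $\square$ the goal list is empty, the first \texttt{eval} clause (lines 1--4) fires, unifies $P$ with \texttt{PatIn}, and returns \texttt{success}; the invariant delivers $P = \alpha\theta$ with $\gamma = \beta$ and $C_i = B_i$.

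The inductive step is a case analysis on the leftmost goal. The \texttt{true}, \texttt{(G1,G2)} and \texttt{conj(Cs)} cases are stuttering steps that only reshape \texttt{Conj} and leave the SLD state, hence the invariant, untouched. A call to a user-defined predicate is discharged by the permutation argument above. For an explicit disjunction \texttt{(G1;G2)}, the clause at lines 11--14 stores a fresh copy of \texttt{(G2,Conj)} into \texttt{Disj} and continues with \texttt{G1}, mirroring the leftmost child of that node in the augmented tree. Here the one piece of genuine bookkeeping is tracking the \texttt{copy\_term} renamings: I must check that the pattern carried along the committed branch still instantiates to $\alpha\theta$, reconciling the renamed-apart \texttt{Disj} branches with the non-renamed conjunctive part that actually shares variables with \texttt{PatIn}.

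The hard part will be showing that the single result \texttt{eval} commits to is the intended one rather than a solution discovered earlier in a left sibling. For clause alternatives this is clean: the permutation makes the refutation's clause leftmost, and since that branch \emph{succeeds}, the depth-first descent reaches the empty conjunction and returns \emph{before} \texttt{backtrack/3} is ever invoked, so nothing stored in \texttt{Disj} can pre-empt $\theta$. If $\qryP$ is read strictly, as permuting only the clause database, the explicit disjunction becomes the delicate point, since its left-to-right order is fixed in the source: if the refutation uses \texttt{G2} while \texttt{G1} admits a solution of its own, one must first arrange that the whole refutation is taken in a leftmost-successful form, and then appeal to the termination hypothesis to guarantee that every left subtree not containing the tracked refutation is finite and failing, so that \texttt{backtrack/3} eventually resumes the branch saved in \texttt{Disj} and \texttt{eval} commits to $\theta$. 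Making this normalisation precise, and proving it compatible with the step-by-step choice of permutations, is the main obstacle of the proof.
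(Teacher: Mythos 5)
Your proposal is correct in its core and takes essentially the same route as the paper's own proof: induction on the length of the SLD-refutation, simulating each resolution step by a deterministic block of \texttt{eval/5} transitions, and using the per-step program permutation afforded by $\qryP$ to place the clause used by the refutation leftmost, so that the depth-first descent commits to the tracked answer $\alpha\theta$ without ever invoking \texttt{backtrack/3} (your base case is exactly the paper's $n=0$ case). If anything you are more scrupulous than the paper, whose inductive step treats only resolution of an atom against program clauses and is silent on the stuttering cases and on the source-level disjunction subtlety you flag; the paper in effect adopts your ``generous'' reading of permutation (the disjunction ordering is induced by the chosen clause ordering), under which your termination-and-failing-left-subtrees fallback---which by itself would not suffice, since a succeeding left disjunct would make \texttt{eval/5} commit to the wrong answer---is not needed.
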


Together, these two theorems show that our meta-interpreter is a conservative
extension of the conventional Prolog semantics.

\section{Related Work}

\paragraph{Conjunctive Delimited Control}
Disjunctive delimited control is the culmination of a line of research on
mechanisms to modify Prolog's control flow and search, which started with the
hook-based approach of \textsc{Tor}~\cite{DBLP:journals/scp/SchrijversDTD14}
and was followed by the development of conjunctive delimited control
for Prolog~\cite{iclp2013,DBLP:conf/ppdp/SchrijversWDD14}.

The listing below shows that disjunctive delimited control entirely
subsumes conjunctive delimited control. The latter behaviour is recovered
by disjoining the captured disjunctive branch. We believe that \textsc{Tor}
is similarly superseded.

\begin{Verbatim}[frame=single]
  nd_reset(Goal,Ball,Cont) :-
    copy_term(Goal,GoalCopy),
    reset(GoalCopy,GoalCopy,R),
    ( R = failure -> fail
    ; R = success(BranchPattern,Branch) ->
      ( Goal = GoalCopy, Cont = 0
      ; Goal = BranchPattern, nd_reset(Branch,Ball,Cont))
    ; R = shift(X,C,BranchPattern,Branch) ->
      ( Goal = GoalCopy, Ball = X, Cont = C
      ; Goal = BranchPattern, nd_reset(Branch,Ball,Cont))
    ).
\end{Verbatim}

Abdallah~\shortcite{DBLP:journals/corr/abs-1708-07081} presents a higher-level
interface for (conjunctive) delimited control on top of that of Schrijvers et
al.~\shortcite{iclp2013}. In particular, it features \emph{prompts}, first
conceived in a Haskell implementation by Dyvbig et al.~\shortcite{prompts},
which allow shifts to dynamically specify up to what reset to capture the
continuation. We believe that it is not difficult to add a similar prompt
mechanism on top of our disjunctive version of delimited control.

\paragraph{Interoperable Engines}
Tarau and Majumdar's Interoperable Engines~\shortcite{DBLP:conf/padl/TarauM09}
propose \emph{engines} as a means for co-operative coroutines in Prolog.  An
engine is an independent instance of a Prolog interpreter that provides answers
to the main interpreter on request.


The predicate \texttt{new\_engine(Pattern,Goal,Interactor)} creates a new
engine with answer pattern \texttt{Pattern} that will execute \texttt{Goal} and
is identified by \texttt{Interactor}.
The predicate \texttt{get(Interactor,Answer)} has an engine execute its goal
until it produces an answer (either by proving the \texttt{Goal}, or explicitly
with \texttt{return/1}).  After this predicate returns, more answers can be
requested, by calling \texttt{get/2} again with the same engine identifier.
The full interface also allows bi-directional communication between engines, but
that is out of scope here.

\begin{figure}[tb!]
\begin{Verbatim}[frame=single]
  get(Interactor,Answer) :-
    get_engine(Interactor,Engine),        % get engine state
    run_engine(Engine,NewEngine,Answer),  % run up to the next answer
    update_engine(Interactor,NewEngine).  % store the new engine state
  return(X) :- shift(return(X)).
  run_engine(engine(Pattern,Goal),NewEngine,Answer) :-
    reset(Pattern,Goal,Result),
    run_engine_result(Pattern,NewEngine,Answer,Result).
  run_engine_result(Pattern,NewEngine,Answer,failure) :-
    NewEngine = engine(Pattern,fail),
    Answer    = no.
  run_engine_result(Pattern,NewEngine,Answer,success(BPattern,B)) :-
    NewEngine = engine(BPattern,B),
    Answer    = the(Pattern).
  run_engine_result(Pattern,NewEngine,Answer,S) :-
    S = shift(return(X),C,BPattern,B)
    BPattern  = Pattern,
    NewEngine = engine(Pattern,(C;B)),
    Answer    = the(X).
\end{Verbatim}
\caption{Interoperable Engines in terms of delimited control.}
\label{fig:interop-engines}
\end{figure}

Figure~\ref{fig:interop-engines} shows that we can implement the \texttt{get/2}
engine interface in terms of delimited control (the full code is available
in Appendix~\ref{sec:app-interop-engines}).
The opposite, implementing disjunctive delimited control with engines, seems
impossible as engines do not provide explicit control over the disjunctive
continuation. Indeed, \texttt{get/2} can only follow Prolog's natural
left-to-right control flow and thus we cannot, e.g., run the disjunctive
continuation before the conjunctive continuation, which is trivial with
disjunctive delimited control.

\paragraph{Tabling without non-bactrackable variables}
Tabling~\cite{swift2012,yap} is a well-known technique that eliminates the
sensitivity of SLD-resolution to clause and goal ordering, allowing a larger
class of programs to terminate.
As a bonus, it may improve the run-time performance (at the expense of
increased memory consumption).

One way to implement tabling---with minimal engineering impact to the Prolog
engine---is the tabling-as-a-library approach proposed by Desouter et
al.~\shortcite{iclp2015}.
This approach requires (global) mutable variables that are not erased by
backtracking to store their data structures in a persistent manner.
With the new \texttt{reset/3} predicate, this is no longer needed, as
(non-backtracking) state can be implemented in directly with disjunctive
delimited control.

\section{Conclusion and Future Work}
We have presented \emph{disjunctive delimited control}, an extension to
delimited control that takes Prolog's non-deterministic nature into account.
This is a conservative extension that enables implementing disjunction-related
language features and extensions as a library.

In future work, we plan to explore a WAM-level implementation of disjunctive
delimited control, inspired by the stack freezing functionality of tabling
engines, to gain access to the disjunctive continuations efficiently.
Similarily, the use of \texttt{copy\_term/2} necessitated by the current
API has a detrimental impact on performance, which might be overcome by a
sharing or shallow copying scheme.


\bibliographystyle{splncs04}
\bibliography{disj}

\newpage
\appendix
\section*{Appendix Table of Contents}
\startcontents[sections]
\printcontents[sections]{l}{1}{\setcounter{tocdepth}{2}}
\newpage

\section{Correctness Proofs}

\subsection{Evaluation Is Sound}
\label{sec:soundness}
\begin{AppendixTheorem}[Soundness]
  For all lists of goals $[A_1,\ldots,A_n]$,
  terms $\alpha,\beta,\gamma,\nu$, variables $P,R$
  conjunctions $B_1,\ldots,B_m$; $C_1,\ldots,C_k$ and substitutions $\theta$,
  if
  \[
    \begin{array}{l}
      \qry \mathit{eval}([A_1,\ldots,A_n],\alpha,\mathit{alt}(\beta,(B_1,\ldots,B_m)),P,R).\\
      P = \nu, R = \mathit{success(\gamma,C_1,\ldots,C_k)}.
    \end{array}
  \]
  and the program contains neither \texttt{shift/1} nor \texttt{reset/3},
  then SLD-resolution\footnote{Standard SLD-resolution, augmented with
    disjunctions and \texttt{conj/1} goals.}
    to finds the following derivation:
  \[
    \begin{array}{c}
      \leftarrow (A_1,\ldots,A_n,\true) ; (\alpha = \beta,B_1,\ldots,B_m)\\
      \vdots\\
      \square\\
      \text{(with solution $\theta$ s.t. $\alpha\theta = \nu$)}
    \end{array}
  \]
\end{AppendixTheorem}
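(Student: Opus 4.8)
\subsection*{Proof proposal}

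The plan is to argue by induction on the structure of the successful \texttt{eval/5} derivation. Because the program contains neither \texttt{shift/1} nor \texttt{reset/3}, clauses (18--27) never fire, and because we assume the query returns $R=\success(\gamma,C_1,\ldots,C_k)$ the meta-level computation terminates; hence the tree of \texttt{eval/5} and \texttt{backtrack/3} calls underlying this success is finite and serves as the induction measure. The invariant I would maintain is a translation from an \texttt{eval/5} state to an SLD-goal: to a call \texttt{eval([$A_1,\ldots,A_n$], PatIn, alt($\beta$,($B_1,\ldots,B_m$)), PatOut, Result)} with \texttt{PatIn}${}=\alpha$ I associate the goal $\leftarrow (A_1,\ldots,A_n,\true)\,;\,(\alpha=\beta,B_1,\ldots,B_m)$, which is exactly the goal in the statement. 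I then show that each \texttt{eval/5} step rewrites its state into one whose associated goal is reachable from the original by zero or more (augmented) SLD-steps, and that the answer $\nu$ reported for \texttt{PatOut} equals $\alpha\theta$ for the composed solution $\theta$.

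The base case is clause (1--4): when the conjunction is empty, \texttt{PatOut} is bound to \texttt{PatIn} and \texttt{Result} packs the branch, so $\nu=\alpha$, $\gamma=\beta$, and the $C_i$ are the $B_i$; the associated goal $\leftarrow \true\,;\,(\alpha=\beta,\ldots)$ is refuted immediately through its first disjunct with the empty substitution, giving $\alpha\theta=\alpha=\nu$. The administrative cases are routine single-step simulations: dropping \texttt{true} (5--6), flattening \texttt{(G1,G2)} (7--8), and expanding \texttt{conj(Cs)} (15--17) all leave the associated goal unchanged up to associativity of conjunction, so the induction hypothesis applies verbatim. For \texttt{fail} (9--10) and for a user-defined call with no matching clause (the \texttt{Clauses = []} branch of 28--33), control passes to \texttt{backtrack/3}; since \texttt{Result} is a success the alternative cannot be empty, so \texttt{backtrack/3} resumes the branch. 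On the SLD side the first disjunct fails and the derivation continues with $\leftarrow(\alpha=\beta,B_1,\ldots,B_m)$; resolving $\alpha=\beta$ and appealing to the induction hypothesis for the branch (whose own \texttt{PatIn} is $\beta$) closes the case, the only care being to show that the composed substitution still satisfies $\alpha\theta=\nu$.

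The two substantive cases are disjunction (11--14) and the general user-defined call (28--33). For a disjunction, \texttt{eval/5} copies \texttt{alt(PatIn, conj([G2|Conj]))} to a fresh \texttt{Branch}, \texttt{disjoin}s it onto \texttt{Disj}, and recurses on \texttt{[G1|Conj]}. On the SLD side I would distribute the leading disjunction, turning the first disjunct into $(G_1,\ldots)\,;\,(G_2,\ldots)$, and then check that the goal associated with the recursive call --- namely $\leftarrow(G_1,\ldots,\true)\,;\,(\alpha=P_3,(\alpha'=P_3,\conj([G_2|\mathit{Conj}]')\,;\,\beta=P_3,B_1,\ldots,B_m))$ produced by \texttt{disjoin/3} --- is, after resolving the fresh equalities $\alpha=P_3$ and $\alpha'=P_3$, the same goal up to renaming of the copied branch. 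For a user-defined call, \texttt{disjoin\_clauses} reifies the matching clauses as a disjunction $(\mathit{Call}=GC_1,\mathit{Body}_1)\,;\,\cdots\,;\,(\mathit{Call}=GC_l,\mathit{Body}_l)$ pushed onto the conjunction; this is exactly the branching SLD performs when resolving \texttt{Call}, with each $\mathit{Call}=GC_i$ standing for unification with the $i$-th clause head. In both cases, distributing the disjunction over the trailing conjunction and the alternative reproduces the SLD search tree, and the induction hypothesis finishes the argument.

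The main obstacle I anticipate is the bookkeeping around \texttt{copy\_term/2}. The disjunctive branches are deliberately renamed apart, whereas the goal in the statement refers to the original, un-renamed $G_2$ and $B_i$. I would have to argue that this renaming is immaterial for the reported answer: \texttt{copy\_term} preserves the internal sharing between the copied pattern $\alpha'$ and the copied continuation, so the fresh equality $\alpha'=P_3$ (later $P_3=\alpha$) re-links precisely the variables of interest, while the remaining local variables of the copy are fresh and disjoint from $\alpha$. Since SLD-resolution is stable under consistent renaming and we only claim $\alpha\theta=\nu$ --- a statement about the pattern variables alone --- the renaming can be discharged by a renaming lemma. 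The companion difficulty is tracking how $\theta$ is composed across the \texttt{backtrack} step, where $\nu$ is computed relative to the branch pattern $\beta$ rather than $\alpha$; reconciling this with $\alpha\theta=\nu$ after the SLD step that resolves $\alpha=\beta$ is the other point that needs care.
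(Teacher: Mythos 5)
Your proposal follows essentially the same route as the paper's proof: induction on the finite meta-level \texttt{eval/5} derivation, the same translation of an \texttt{eval/5} state into the disjunctive SLD-goal $\leftarrow (A_1,\ldots,A_n,\true)\,;\,(\alpha=\beta,B_1,\ldots,B_m)$, and the same case analysis, including reducing the empty-clause-list branch of a user-defined call to the \texttt{fail} case via \texttt{backtrack/3}. The ``renaming lemma'' you anticipate is exactly the property the paper postulates for \texttt{copy\_term/2}, namely $\delta\sigma=\alpha\sigma \Rightarrow (D_1,\ldots,D_n)\sigma=(G_2,A_2,\ldots,A_n)\sigma$, which it uses to replace the copied, pattern-linked branch by the original disjunct; the paper handles the substitution bookkeeping in the backtrack case at the same informal level you flag.
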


\begin{proof}
  By induction on the Prolog derivation.
  
  \pcase{[]} In this case the only possible result is $P = \alpha$:
  \begin{align*}
    \begin{array}{l}
      \qry \mathit{eval}([],\alpha,\mathit{alt}(\beta,(B_1,\ldots,B_m)),P,R).\\
      P = \alpha,
      R = \mathit{success(\beta,(B_1,\ldots,B_m))}.
    \end{array}
    \Longrightarrow
    \begin{array}{c}
      \leftarrow \true ; (\alpha = \beta,B_1,\ldots,B_m)\\
      \leftarrow \true (= \square)\\
      \text{(with solution $\epsilon$)}
    \end{array}
  \end{align*}
  where $\epsilon$ is the empty substitution.

  \pcase{[true|\Conj]} In this case the body contains only a direct recursive
  call, we have:
  \begin{align*}
    &
    \begin{array}{l}
      \qry \mathit{eval}([\true,A_2,\ldots,A_n],\alpha,\mathit{alt}(\beta,(B_1,\ldots,B_m)),P,R).\\
      P = \nu,
      R = \mathit{success(\gamma,(C_1,\ldots,C_k))}.
    \end{array}\\
    \iff&
    \begin{array}{l}
      \qry \mathit{eval}([A_2\ldots,A_n],\alpha,\mathit{alt}(\beta,(B_1,\ldots,B_m)),P,R).\\
      P = \nu,
      R = \mathit{success(\gamma,(C_1,\ldots,C_k))}.
    \end{array}
    & \fbox{\parbox[t]{11em}{\textbf{Definition and Determinism of eval/5}}}\\
    \Longrightarrow&
    \begin{array}{c}
      \leftarrow (A_2,\ldots,A_n,\true);(\alpha = \beta,B_1,\ldots,B_m)\\
      \vdots\\
      \square\\
      \text{(with solution $\theta$ s.t. $\alpha\theta = \nu$)}
    \end{array}
    &\fbox{\textbf{Induction}}\\
    \Longrightarrow&
    \begin{array}{c}
      \leftarrow (\true,A_2,\ldots,A_n,\true);(\alpha = \beta,B_1,\ldots,B_m)\\
      \vdots\\
      \square\\
      \text{(with solution $\theta$ s.t. $\alpha\theta = \nu$)}
    \end{array}
    & \fbox{\textbf{SLD-resolution}}
  \end{align*}

  \pcase{[(G1,G2)|\Conj]} A straightforward calculation gives the desired result:
  \begin{align*}
    &
    \begin{array}{l}
      \qry \mathit{eval}([(G_1,G_2),A_2,\ldots,A_n],\alpha,\mathit{alt}(\beta,(B_1,\ldots,B_m)),P,R).\\
      P = \nu,
      R = \mathit{success(\gamma,(C_1,\ldots,C_k))}.
    \end{array}\\
    &\fbox{\textbf{Definition and Determinism of eval/5}}\\
    \iff&
    \begin{array}{l}
      \qry \mathit{eval}([G_1,G_2,A_2\ldots,A_n],\alpha,\mathit{alt}(\beta,(B_1,\ldots,B_m)),P,R).\\
      P = \nu,
      R = \mathit{success(\gamma,(C_1,\ldots,C_k))}.
    \end{array}\\
    \Longrightarrow&
    \begin{array}{c}
      \leftarrow (G_1,G_2,A_2,\ldots,A_n,\true);(\alpha = \beta,B_1,\ldots,B_m)\\
      \vdots\\
      \square\\
      \text{(with solution $\theta$ s.t. $\alpha\theta = \nu$)}
    \end{array}
    &\fbox{\textbf{Induction}}\\
    \Longrightarrow&
    \begin{array}{c}
      \leftarrow ((G_1,G_2),A_2,\ldots,A_n,\true);(\alpha = \beta,B_1,\ldots,B_m)\\
      \vdots\\
      \square\\
      \text{(with solution $\theta$ s.t. $\alpha\theta = \nu$)}
    \end{array}
    & \fbox{\textbf{Associativity}}
  \end{align*}

  \pcase{[fail|\Conj]} This case immediately calls \texttt{backtrack/3}:
  \begin{align*}
    &
    \begin{array}{l}
      \qry \mathit{eval}([\fail,A_2,\ldots,A_n],\alpha,\mathit{alt}(\beta,(B_1,\ldots,B_m)),P,R).\\
      P = \nu,
      R = \mathit{success(\gamma,(C_1,\ldots,C_k))}.
     \end{array}\\
    \iff&
    \begin{array}{l}
      \qry \mathit{backtrack}(\mathit{alt}(\beta,(B_1,\ldots,B_m)),P,R).\\
      P = \nu,
      R = \mathit{success(\gamma,(C_1,\ldots,C_k))}.
    \end{array}
    &\fbox{\parbox[t]{11em}{\textbf{Definition and Determinism of eval/5}}}
  \end{align*}
  At this point, we can see that $\mathit{alt}(\beta,(B_1,\ldots,B_m))$ cannot
  be empty, for otherwise $R = \failure$.
  Execution of the else branch then gives:
  \begin{align*}
    &
    \begin{array}{l}
      \qry \mathit{eval}([B_1,\ldots,B_m],\beta,\mathit{alt}(\delta,\fail),P,R).\\
      P = \nu,
      R = \mathit{success(\gamma,(C_1,\ldots,C_k))}.
    \end{array}
    &\fbox{\textbf{$\delta$ is fresh}}\\
    \Longrightarrow&
    \begin{array}{c}
      \leftarrow (B_1,\ldots,B_m,true) ; (\beta = \delta, \fail)\\
      \vdots\\
      \square\\
      \text{(with solution $\theta$ s.t. $\beta\theta = \nu$)}
    \end{array}
    &\fbox{\textbf{Induction}}\\
    \Longrightarrow&
    \begin{array}{c}
      \leftarrow (B_1,\ldots,B_m,true)\\
      \vdots\\
      \square\\
      \text{(with solution $\theta$ s.t. $\beta\theta = \nu$)}
    \end{array}
    &\fbox{\textbf{$\leftarrow \beta = \delta, \fail$ is irrefutable}}\\
    \Longrightarrow&
    \begin{array}{c}
      \leftarrow (\alpha = \beta,B_1,\ldots,B_m,true)\\
      \vdots\\
      \square\\
      \text{(with solution $\theta$ s.t. $\alpha\theta = \nu$)}
    \end{array}
    &\fbox{\textbf{SLD-resolution}}\\
    \Longrightarrow&
    \begin{array}{c}
      \leftarrow (\fail,A_2,\ldots,A_n,true); (\alpha = \beta,B_1,\ldots,B_m,true)\\
      \vdots\\
      \square\\
      \text{(with solution $\theta$ s.t. $\alpha\theta = \nu$)}
    \end{array}
    &\fbox{\textbf{SLD-resolution}}
  \end{align*}

  \pcase{[(G1;G2),\Conj]} Assume that the result of the \texttt{copy\_term/2}
  call is $\mathit{alt}(\delta,\conj([D_1,\ldots,D_n]))$, where all the variables
  are fresh, such that
  \begin{equation}\label{eq:copy-term}
    \delta\sigma = \alpha\sigma \Rightarrow (D_1,\ldots,D_n)\sigma = (G_2,A_2,\ldots,A_n)\sigma
  \end{equation}
  Then the result of disjoining this with $\mathit{alt}(\beta,(B_1,\ldots,B_m))$
  is the new disjunction
  \[
  \xi = \mathit{alt}(F,((\delta = F,\conj([D_1,\ldots,D_n]));(\beta = F,B_1,\ldots,B_m)))
  \]
  with $F$ fresh.
  Now we reason as follows:
  \begin{align*}
    &
    \begin{array}{l}
      \qry \mathit{eval}([(G_1;G_2)A_2,\ldots,A_n],\alpha,\mathit{alt}(\beta,(B_1,\ldots,B_m)),P,R).\\
      P = \nu,
      R = \mathit{success(\gamma,(C_1,\ldots,C_k))}.
    \end{array}\\
    \iff&
    \begin{array}{l}
      \qry \mathit{eval}([G_1,A_2,\ldots,A_n],\alpha,\xi,P,R).\\
      P = \nu,
      R = \mathit{success(\gamma,(C_1,\ldots,C_k))}.
    \end{array}\\
    &\fbox{\textbf{Definition and Determinism of eval/5}}\\
    \Longrightarrow&
    \begin{array}{c}
      \leftarrow (G_1,A_2,\ldots,A_n,\true) ; (\alpha = F, ((\delta = F,\conj([D_1,\ldots,D_n]));(\beta = F,B_1,\ldots,B_m)))\\
      \vdots\\
      \square\\
      \text{(with solution $\theta$ s.t. $\alpha\theta = \nu$)}
    \end{array}\\
    &\fbox{\textbf{Associativity, Distributivity, resolution of \texttt{conj/1}, $F$ is local variable}}\\
    \Longrightarrow&
    \begin{array}{c}
      \leftarrow (G_1,A_2,\ldots,A_n,\true) ; (\alpha = \delta,D_1,\ldots,D_n) ; (\alpha,\beta,B_1,\ldots,B_m)\\
      \vdots\\
      \square\\
      \text{(with solution $\theta$ s.t. $\alpha\theta = \nu$)}
    \end{array}
  \end{align*}
  Note that if $\sigma$ refutes
  $\leftarrow \alpha = \delta,D_1,\ldots,D_n$; then
  $\delta\sigma = \alpha\sigma$, hence~\eqref{eq:copy-term}
  $(D_1,\ldots,D_n)\sigma = (G_2,A_2,\ldots,A_n)\sigma$.
  Thus, $\sigma$ also refutes $\leftarrow G_2,A_2,\ldots,A_n$.

  So, we may write the above as:
  \begin{align*}
    &
    \begin{array}{c}
      \leftarrow (G_1,A_2,\ldots,A_n,\true) ; (G_2,A_2,\ldots,A_n) ; (\alpha = \beta,B_1,\ldots,B_m)\\
      \vdots\\
      \square\\
      \text{(with solution $\theta$ s.t. $\alpha\theta = \nu$)}
    \end{array}\\
    &\fbox{\textbf{Distributivity}}\\
    \Longrightarrow&
    \begin{array}{c}
      \leftarrow ((G_1;G_2),A_2,\ldots,A_n,\true) ; (\alpha = \beta,B_1,\ldots,B_m)\\
      \vdots\\
      \square\\
      \text{(with solution $\theta$ s.t. $\alpha\theta = \nu$)}
    \end{array}
  \end{align*}

  \pcase{[conj(Cs)|\Conj]} 
  \begin{align*}
    &
    \begin{array}{l}
      \qry \mathit{eval}([\conj([D_1,\ldots,D_l]),A_2,\ldots,A_n],\alpha,\mathit{alt}(\beta,(B_1,\ldots,B_m)),P,R).\\
      P = \nu,
      R = \mathit{success(\gamma,(C_1,\ldots,C_k))}.
    \end{array}\\
    &\fbox{\textbf{Definition and Determinism of \texttt{eval/5}}}\\
    \iff&
    \begin{array}{l}
      \qry \mathit{eval}([D_1,\ldots,D_l,A_2,\ldots,A_n],\alpha,\mathit{alt}(\beta,(B_1,\ldots,B_m)),P,R).\\
      P = \nu,
      R = \mathit{success(\gamma,(C_1,\ldots,C_k))}.
    \end{array}\\
    &\fbox{\textbf{Induction}}\\
    \Longrightarrow&
    \begin{array}{c}
      \leftarrow (D_1,\ldots,D_l,A_2,\ldots,A_n,\true) ; (\alpha = \beta,B_1,\ldots,B_m)\\
      \vdots\\
      \square\\
      \text{(with solution $\theta$ s.t. $\alpha\theta = \nu$)}
    \end{array}\\
    &\fbox{\textbf{Induction}}\\
    \Longrightarrow&
    \begin{array}{c}
      \leftarrow (\conj([D_1,\ldots,D_l]),A_2,\ldots,A_n,\true) ; (\alpha = \beta,B_1,\ldots,B_m)\\
      \vdots\\
      \square\\
      \text{(with solution $\theta$ s.t. $\alpha\theta = \nu$)}
    \end{array}
  \end{align*}
  
  \pcase{[C|\Conj]}
  In this case there are two possibilities: if no clauses match with the head,
  \texttt{Clauses} is empty, and the procedure backtracks. In this case the
  reasoning is identical to the case \texttt{[fail|\Conj]}.
  
  Otherwise, assume we have a list
  $[(H :\!\!- \mathit{Body})_1,\ldots,(H :\!\!-\mathit{Body})_l]$ of clauses
  that match $C$.
  After \texttt{disjoin\_clauses/3}, this list becomes a disjunction
  $(C = H_1,Body_1) ; \cdots ; (C = H_l,Body_l)$.

  From the recursive call we get:
  \begin{align*}
    &
    \begin{array}{l}
      \qry \mathit{eval}([C|A_2,\cdots,A_n],\alpha,\mathit{alt}(\beta,(B_1,\ldots,B_m)),P,R).\\
      P = \nu,
      R = result(\gamma,(C_1,\ldots,C_k))
    \end{array}\\
    &\fbox{\textbf{Definition and Determinism of \texttt{eval/5}}}\\
    \iff&
    \begin{array}{l}
      \qry \mathit{eval}([(C = H_1,Body_1) ; \cdots ; (C = H_l,Body_l)|A_2,\cdots,A_n],\alpha,\mathit{alt}(\beta(B_1,\ldots,B_m)),P,R).\\
      P = \nu,
      R = result(\gamma(C_1,\ldots,C_k))
    \end{array}\\
    &\fbox{\textbf{Induction}}\\
    \Longrightarrow&
    \begin{array}{c}
      \leftarrow ((C = H_1,Body_1) ; \cdots ; (C = H_l,Body_l)),A_2,\ldots,A_n,true) ; (\alpha = \beta, B_1,\ldots,B_m)\\
      \vdots\\
      \square\\
      \text{(with solution $\theta$, s.t. $\alpha\theta = \nu$)}
    \end{array}
  \end{align*}

  If $\leftarrow ((C = H_1,Body_1) ; \cdots ; (C = H_l,Body_l)),A_2,\ldots,A_n,true)$
  has a refutation $\theta$, then let $\theta = \theta_i\theta'$ such that
  $\theta_i$ refutes $\leftarrow C = H_i,Body_i$ and $\theta'$ refutes
  $\leftarrow A_2,\ldots,A_n$:
  \begin{align*}
  \begin{array}{c}\leftarrow C = H_i,Body_i\\\vdots\\\square\\\text{(with solution $\theta_i$)}\end{array}
  \Longrightarrow&
  \begin{array}{c}\leftarrow Body_i\sigma\\\vdots\\\square\\\text{(with solution $\theta_i'$)}\end{array}
  \text{s.t. $C\sigma = H_i\sigma$}
  \Longrightarrow
  \begin{array}{c}
    \leftarrow C\\
    \leftarrow Body_i\sigma\\
    \vdots\\
    \square\\
    \text{(with solution $\theta_i = \sigma\theta_i'$)}
  \end{array}
  \end{align*}
  Then $\theta$ also refutes $\leftarrow C,A_2,\ldots,A_n$, so we can conclude:
  \[\begin{array}{c}
    \leftarrow (C,A_2,\ldots,A_n,true) ; (\alpha = \beta, B_1,\ldots,B_m)\\
    \vdots\\
    \square\\
    \text{(with solution $\theta$, s.t. $\alpha\theta = \nu$)}
  \end{array}\]
\end{proof}

\newpage

\subsection{Evaluation Is Complete}
\label{sec:completeness}
\begin{AppendixTheorem}
  For all goals $\leftarrow A_1,\ldots,A_n$, if
  \[
  \begin{array}{c}
    \leftarrow A_1,\ldots,A_n\\
    \vdots\\
    \square\\
    \text{(with solution $\theta$)}
  \end{array}\]
  then
  \[\begin{array}{l}
  \qryP \mathit{eval}([A_1,\ldots,A_n],\alpha,\mathit{alt}(\beta,(B_1,\ldots,B_m)),P,R).\\
  P = \mathit{success}(\gamma,(C_1,\ldots,C_k)),R = \alpha\theta.
  \end{array}\]
  where $\qryP$ is defined as $\qry$, but each step is allowed to use a
  different permutation of the program.
\end{AppendixTheorem}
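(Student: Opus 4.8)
The plan is to prove completeness as the converse of the Soundness theorem, by induction on the length of the given SLD-derivation of $\leftarrow A_1,\ldots,A_n$, performing at each step the same case analysis on the leading goal $A_1$ that structures both the definition of \texttt{eval/5} and the soundness proof. The idea is that \texttt{eval/5} is itself an implementation of SLD-resolution with explicit management of the disjunctive continuation, so I would read off which clause of \texttt{eval/5} fires for each shape of $A_1$ and show that the single deterministic answer eval commits to can be made to coincide with the desired one, with output pattern $\alpha\theta$ and a \texttt{success} result.

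For the base case the derivation already sits at $\square$, so the conjunction is empty, the first clause of \texttt{eval/5} (lines 1--4) fires, the output pattern is unified to $\alpha$ with empty accumulated substitution, and the output pattern equals $\alpha\epsilon = \alpha\theta$. The inductive cases for \texttt{true}, $(G_1,G_2)$ and \texttt{conj(Cs)} are routine: the derivation and the interpreter rewrite the goal list in lock-step (dropping \texttt{true}, flattening a comma, splicing in a \texttt{conj}), so by the definition and determinism of \texttt{eval/5} each reduces to a strictly shorter derivation and the induction hypothesis applies, exactly as in the corresponding soundness cases. The goal \texttt{fail} cannot head a refutable conjunction, so that case is vacuous along the chosen derivation; correspondingly the disjunctive continuation $\mathit{alt}(\beta,(B_1,\ldots,B_m))$ is never consulted, since eval attempts the conjunctive branch in full before ever invoking \texttt{backtrack/3}, and the conjunctive part already carries the refutation.

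The two delicate cases are the explicit disjunction $(G_1;G_2)$ and the user-defined call $C$, and these are precisely where the per-step permutation freedom of $\qryP$ is indispensable. When $A_1$ is a call $C$, the derivation resolves it against some matching clause $H_i :\!\!- Body_i$; I would invoke $\qryP$ to permute the program so that this clause is listed first, whence \texttt{disjoin\_clauses/3} places $(C = H_i, Body_i)$ at the head of the built disjunction and eval descends into it before any sibling clause, reducing to the sub-derivation $\leftarrow (C = H_i, Body_i), A_2,\ldots,A_n$, to which the induction hypothesis applies with the incoming disjunction carried along unchanged. When $A_1$ is $(G_1;G_2)$, eval pushes a freshly renamed copy of the $G_2$-branch onto the disjunction via \texttt{disjoin/3} and proceeds with $G_1$: if the refutation runs through $G_1$ the induction hypothesis applies immediately with the enlarged disjunction, and if it runs through $G_2$ we again appeal to the permutation to steer eval's fixed left-to-right, depth-first traversal onto the branch carrying $\theta$.

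The main obstacle, and the reason the theorem is stated with $\qryP$ rather than $\qry$, is that \texttt{eval/5} commits \emph{deterministically} to one answer, fixed by the clause order and by its strictly left-to-right, depth-first traversal of both clause alternatives and disjunctive branches. A plain run might therefore return some other solution first, or diverge in an earlier branch before ever reaching the one realising $\theta$ --- which is exactly why the statement is hedged with ``provided that it terminates''. The crux of the argument is thus to exhibit, at each resolution step, a permutation that aligns the interpreter's traversal with the particular branch of the SLD search tree that yields $\theta$, deferring every alternative eval would otherwise attempt first. Granting this alignment, the remaining bookkeeping --- that \texttt{PatOut} is unified with \texttt{PatIn} $= \alpha$ at success, and that the substitution accumulated along the chosen branch is $\theta$ --- delivers the claimed answer $\alpha\theta$, mirroring the conclusion $\alpha\theta = \nu$ of the soundness proof.
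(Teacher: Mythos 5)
Your overall strategy is the paper's: induction on the length of the SLD-derivation, with the base case discharged by the first clause of \texttt{eval/5} ($P=\alpha$, $\theta=\epsilon$), and the resolution step discharged by using the per-step permutation freedom of $\qryP$ to bring the clause $H \leftarrow G_1,\ldots,G_f$ actually used by the derivation to the front, so that \texttt{disjoin\_clauses/3} places $(A_1 = H, G_1,\ldots,G_f)$ leftmost, \texttt{eval/5} deterministically descends into it before any sibling clause, and the induction hypothesis (stated for an arbitrary disjunctive continuation) applies to the resolvent under $\theta_1$. That is exactly the paper's single inductive case, and your structural cases (\texttt{true}, comma, \texttt{conj/1}), which the paper leaves implicit, are handled correctly.

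The gap is in your $(G_1;G_2)$ case, in the subcase where the refutation runs through $G_2$. You claim to ``again appeal to the permutation to steer eval's fixed left-to-right, depth-first traversal onto the branch carrying $\theta$'', but $\qryP$ only permutes the \emph{program's clauses} at each step; it has no power over goal-level syntax. The disjunction clause of \texttt{eval/5} pushes a renamed copy of $(G_2,\mathit{Conj})$ onto \texttt{Disj} and commits to $G_1$ under \emph{every} permutation of the program. Hence the run reaches the $G_2$ branch only if the evaluation of $G_1$ finitely fails, which nothing in your argument (nor in the theorem's hypothesis) guarantees: with the program \texttt{p :- p.} and the goal $(p \,;\, \true)$, SLD refutes the goal through the right disjunct, yet \texttt{eval/5} diverges inside $p$ under the unique available permutation, so no choice of permutations yields the claimed answer. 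Even when $G_1$ does fail finitely, your induction hypothesis does not apply as stated: \texttt{backtrack/3} relaunches \texttt{eval/5} on the renamed-apart branch with its own pattern copy, so you would additionally need a renaming-transport lemma (the analogue of the \texttt{copy\_term/2} property used in the soundness proof's disjunction case) together with an auxiliary claim about runs prefixed by a finitely failing conjunct. The paper avoids all of this by treating, in its inductive step, only an atomic selected goal resolved against a program clause --- precisely the situation where permuting the program suffices --- so if you want to cover goal-level disjunctions you must either restrict to refutations whose abandoned left disjuncts finitely fail, or weaken the conclusion from ``the first answer is $\alpha\theta$'' to ``$\alpha\theta$ is eventually enumerated''.
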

\begin{proof}
  By induction on the length of the derivation.

  \paragraph{Base} \fbox{$n = 0$}:
  $\leftarrow A_1,\ldots,A_n = \leftarrow \true = \square$ then $\theta$ is
  empty, and
  \[
  \begin{array}{l}
    \qryP \mathit{eval}([],\alpha,\mathit{alt}(\beta,(B_1,\ldots,B_m)),P,R).\\
    P = \mathit{success}(\beta,(B_1,\ldots,B_m)), R = \alpha
  \end{array}
  \]
  by definition.

  \paragraph{Induction} Let
  \[
  \begin{array}{cl}
    \leftarrow A_1,A_2,\ldots,A_n\\
    \leftarrow G_1,\ldots,G_f,A_2,\ldots,A_n\theta_1 &\text{(clause $H \leftarrow G_1,\ldots,G_f$; $A_1\theta_1 = H\theta_1$)}\\
    \vdots\\
    \square\\
    \text{(with substitution $\theta = \theta_1\theta_2$)}
  \end{array}
  \]
  By induction we have,
  \begin{align*}
    &\begin{array}{l}
      \qryP \mathit{eval}([G_1,\ldots,G_f,A_2,\ldots,A_n\theta_1],\alpha\theta_1,\mathit{alt}(\delta,(D_1,\ldots,D_l)),P,R)\\
      P = \mathit{success}(\gamma,(C_1,\ldots,C_k)), S = \alpha\theta_1\theta_2.
    \end{array}\\
    \iff&
    \begin{array}{l}
      \qryP \mathit{eval}([(A_1 = H,G_1,\ldots,G_f ; \ldots) ,A_2,\ldots,A_n],\alpha,\mathit{alt}(\beta,(B_1,\ldots,B_m)),P,R)\\
      P = \mathit{success}(\gamma,(C_1,\ldots,C_k)), S = \alpha\theta_1\theta_2.
    \end{array}\\
    \Longrightarrow&
    \begin{array}{l}
      \qryP \mathit{eval}([A_1,A_2,\ldots,A_n],\alpha,\mathit{alt}(\beta,(B_1,\ldots,B_m)),P,R)\\
      P = \mathit{success}(\gamma,(C_1,\ldots,C_k)), S = \alpha\theta_1\theta_2.
    \end{array}
  \end{align*}
\end{proof}
\newpage

\section{Additional Examples}

\subsection{Negation}
\begin{Verbatim}
not(G) :-
  copy_term(G,GC),
  reset(GC,GC,R),
  R = failure.
\end{Verbatim}

\subsection{Interoperable Engines}
\label{sec:app-interop-engines}
\begin{Verbatim}
engines(G) :-
    engines(G,[]).

new_engine(Pattern,Goal,Interactor) :-
    shift(new_engine(Pattern,Goal,Interactor)).

get(Interactor,Answer) :-
    shift(get(Interactor,Answer)).

return(X) :-
    shift(return(X)).

engines(G,EngineList) :-
    copy_term(G,GC),
    reset(GC,GC,R),
    engines_result(G,GC,EngineList,R).

engines_result(_,_,_,failure) :-
    fail.
engines_result(G,GC,EngineList,success(BC,B)) :-
    (G = GC ; G = BC, engines(B,EngineList)).
engines_result(G,GC,EngineList,S) :-
    S = shift(new_engine(Pattern,Goal,Interactor),C,BC,B),
    length(EngineList,Interactor),
    copy_term(Pattern-Goal,PatternCopy-GoalCopy),
    NewEngineList = [Interactor-engine(PatternCopy,GoalCopy)|EngineList],
    G = GC,
    G = BC,
    engines((C;B),NewEngineList).
engines_result(G,GC,EngineList,S) :-
    S = shift(get(Interactor,Answer),C,BC,B),
    member(Interactor-Engine,EngineList),
    run_engine(Engine,NewEngine,Answer),
    update(Interactor,NewEngine,EngineList,NewEngineList),
    G = GC,
    G = BC,
    engines((C;B),NewEngineList).

update(K,NewV,[K-_|T],[K-NewV|T]).
update(K,NewV,[OtherK-V|T],[OtherK-V|T2]) :-
    K \== OtherK,
    update(K,NewV,T,T2).
		  
run_engine(engine(Pattern,Goal),NewEngine,Answer) :-
    reset(Pattern,Goal,Result),
    run_engine_result(Pattern,NewEngine,Answer,Result).

run_engine_result(Pattern,NewEngine,Answer,failure) :-
    NewEngine = engine(Pattern,fail),
    Answer    = no.
run_engine_result(Pattern,NewEngine,Answer,success(BPattern,B)) :-
    NewEngine = engine(BPattern,B),
    Answer    = the(Pattern).
run_engine_result(Pattern,NewEngine,Answer,shift(return(X),C,BPattern,B)) :-
    BPattern  = Pattern,
    NewEngine = engine(Pattern,(C;B)),
    Answer    = the(X).
\end{Verbatim}

\subsection{ProbLog}
\label{sec:problog}

\begin{Verbatim}
fact(F) :-
    shift(fact(F,V)),
    V = t.

is_true(F,Pc) :- member(F-t,Pc).
is_false(F,Pc) :-member(F-f,Pc).

problog(Goal) :- problog(Goal,[]).
problog(Goal,Pc) :-
    reset(Goal,Goal,Result),
    analyze_problog(Result,Pc).

analyze_problog(success(_,_),_Pc).
analyze_problog(shift(fact(F,V),C,_,Branch),Pc) :-
    is_true(F,Pc),
    V = t,
    problog((C;Branch),Pc).
analyze_problog(shift(fact(F,V),C,_,Branch),Pc) :-
    is_false(F,Pc),
    V = f,
    problog((C;Branch),Pc).
analyze_problog(shift(fact(F,V),C,_,Branch),Pc) :-
    not(is_true(F,Pc)),
    not(is_false(F,Pc)),
    msw(F,V),
    problog((C;Branch),[F-V|Pc]).
analyze_problog(failure,_Pc) :- fail.
% NEGATION DOES NOT WORK.
\end{Verbatim}

Example usage:
\begin{Verbatim}
  % 0.5 :: f1.
  values_x(f1,[t,f],[0.5,0.5]).
  f1 :- fact(f1).
  % 0.5 :: f2.
  values_x(f2,[t,f],[0.5,0.5]).
  f2 :- fact(f2).

  p :- f1.
  p :- f2
  ?- solutions(prob(problog((f1,f1)))).
  problog((f1,f1)): 0.5
  ?- solutions(prob(problog(p))).
  problog(p): 0.75
\end{Verbatim}
\newpage

\section{Example Meta-Interpreter Trace}
\label{sec:example-trace}

Consider the simple program:
\begin{Verbatim}
  p(1).
  p(2) :- shift(2).
\end{Verbatim}

The following table shows the values of each of the arguments of the
meta-interpreter, while evaluating the goal \texttt{p(X)}.

\begin{tabular}{clc}
  \toprule
  \texttt{PatIn} & \texttt{Conj} & \texttt{Disj}\\
  \midrule
  \texttt{X} & \texttt{[p(X)]}                      & \texttt{alt(Y,fail)}\\
  \texttt{X} & \texttt{[(X=1,true;}                 & \texttt{alt(Y,fail)}\\
             & \texttt{X = 2,shift(2))]}            &\\
  \texttt{X} & \texttt{[(X=1,true)]}                & \texttt{alt(Z1,(Z1=X1,X1=2,shift(2) ; Z1=Y,fail))}\\
  \texttt{X} & \texttt{[X=1,true]}                  & \texttt{alt(Z1,(Z1=X1,X1=2,shift(2) ; Z1=Y,fail))}\\
  \texttt{1} & \texttt{[true]}                      & \texttt{alt(Z1,(Z1=X1,X1=2,shift(2) ; Z1=Y,fail))}\\
  \texttt{1} & \texttt{[]}                          & \texttt{alt(Z1,(Z1=X1,X1=2,shift(2) ; Z1=Y,fail))}\\
  \multicolumn{3}{l}{\texttt{PatOut=1}, \texttt{Result=success(Z1,(Z1=X1,X1=2,shift(2) ; Z1=Y,fail))}}\\
  \bottomrule
\end{tabular}\\

We can then evaluate the alternative branches:\\

\begin{tabular}{clc}
  \toprule
  \texttt{PatIn} & \texttt{Conj} & \texttt{Disj}\\
  \midrule
  \texttt{Z1} & \texttt{[(Z1=X1,X1=2,shift(2) ; Z1=Y,fail))]} & \texttt{alt(A,fail)}\\
  \texttt{Z1} & \texttt{[(Z1=X1,X1=2,shift(2))]} & \texttt{alt(B,(Z2=B,Z2=Y,fail;} \\
              &                                  & \texttt{A=B,fail)}\\
  \texttt{Z1} & \texttt{[Z1=X1,X1=2,shift(2)]}   & \texttt{alt(B,(Z2=B,Z2=Y,fail;} \\
              &                                  & \texttt{A=B,fail)}\\
  \texttt{X1} & \texttt{[X1=2,shift(2)]}         & \texttt{alt(B,(Z2=B,Z2=Y,fail;} \\
              &                                  & \texttt{A=B,fail)}\\
  \texttt{2}  & \texttt{[shift(2))}              & \texttt{alt(B,(Z2=B,Z2=Y,fail;} \\
              &                                  & \texttt{A=B,fail)}\\
  \texttt{2}  & \texttt{[]}                      & \texttt{alt(B,(Z2=B,Z2=Y,fail;} \\
              &                                  & \texttt{A=B,fail)}\\
  \multicolumn{3}{l}{\texttt{Result=shift(2,conj([]),B,(Z2=B,Z2=Y,fail;A=B,fail))}}\\
  \bottomrule
\end{tabular}\\

and again\\

\begin{tabular}{clc}
  \toprule\\
  B        & \texttt{[(Z2=B,Z2=Y,fail;A=B,fail)]} & \texttt{alt(C,fail)}\\
  $\vdots$ & $\vdots$                             &\vdots\\
  Y        & \texttt{[fail]} & \texttt{alt(D,B1=D,A=B1,fail ; C=D,fail)}\\
  \multicolumn{3}{l}{\texttt{backtracking}}\\
  D        & \texttt{[(B1=D,A=B1,fail ; C=D,fail)]} & \texttt{alt(E,fail)}\\
  $\vdots$ & $\vdots$                             &\vdots\\
  D        & \texttt{[fail]} & \texttt{alt(E,fail)}\\
  \multicolumn{3}{l}{backtracking fails: \texttt{Result=failure}}\\
  \bottomrule
\end{tabular}

\newpage

\section{Full Meta-Interpreter}
The following code has been tested on SWI-Prolog versions 7.6.3 and
8.0.3.
It uses the \texttt{type\_check} package to type-check the Prolog code.
The type-checking annotations can be commented out, without loss of
functionality on systems that do not support type-checking.

\begin{Verbatim}
%%-----------------------------------------------------------------------------
%% Meta-interpreter for disjunctive delimited control.
%%
%% The meta-interpreter supports most basic Prolog features, except if then
%% else.
%%
%% Author: Alexander Vandenbroucke (alexander.vandenbroucke@kuleuven.be)
%% Author: Tom Schrijvers (tom.schrijvers@kuleuven.be)
%%
%%-----------------------------------------------------------------------------

:- style_check(-singleton).
:- use_module(library(type_check)).


%% preliminaries

% The operator :=/2 is a more strictly typed synonym of =/2.
:- op(700,xfx,:=).
:- pred :=(A,A).
:=(X,Y) :- X = Y.

:- trust_pred copy_term(A,A).
:- trust_pred clause(pred,pred).
:- trust_pred shift(any).
:- trust_pred reset(P,pred,result(P)).
:- trust_pred write(any).
:- trust_pred length(list(A),integer).
:- trust_pred member(A,list(A)).



%%-----------------------------------------------------------------------------
%% all solutions
:- pred solutions(pred).
solutions(G0) :-
    copy_term(G0,G),
    copy_term(G,GC),
    empty_disj(EmptyDisj),
    do_reset((GC,true),G,GC,Result,EmptyDisj),
    analyze_solutions(G0,G,Result).

:- pred analyze_solutions(pred,pred,result(pred)).
analyze_solutions(G0,G,success(PatternCopy,Branch)) :-
    ( G0 = G
    ; copy_term(PatternCopy,PC),
      empty_disj(EmptyDisj),
      do_reset((Branch,true),PC,PatternCopy,Result,EmptyDisj),
      analyze_solutions(G0,PC,Result)
    ).
analyze_solutions(_,_,shift(_,_,_,_)) :-
    write('solutions: unexpected shift/1.\n'),
    fail.
analyze_solutions(_,_,failure) :- fail.



%%-----------------------------------------------------------------------------
%% Meta Interpreter

:- pred do_reset(pred,P,P,result(P),disjunction(P)).
%% do_reset(Conjuction,Pattern,PatternCopy,Result,Disjunction)
%%
%% * On success or shift, Pattern contains the current solution.
%%   Pattern MUST be an unbound variable.
%% * PatternCopy is a fresh copy of Pattern. Note: the contents of PatternCopy
%%   are not always sensible after do_reset/5, hence PatternCopy should not be
%%   inspected afterwards.
%% * Afterwards Result is one of:
%%    - success(BranchPattern,Branch): BranchPattern is a fresh copy of
%%      Pattern, Branch is a disjunctive continuation (see below). In this case
%%      PatternCopy and Pattern are unified, to carry out the current
%%      substitution.
%%    - shift(Term,Conjunction,BranchPattern,Branch):
%%      Term is the term that was shifted, Conjuction is the conjunctive
%%      continuation. BranchPattern is a fresh copy of Pattern, Branch is the
%%      disjunctive continuation. Pattern and PatternCopy are unified to carry
%%      out the current substitution.
%%    - failure: when the conjunction and the disjunction fail
%%   Result MUST be an unbound variable.
%% * Conjunction: the current conjunctive goal. The variables it contains
%%   should not occur in Pattern. Moreover, the final conjunct should always be
%%   true/0. The conjunction must NOT be empty.
%% * Disjunction: a disjunction/2, that contains the disjunctive goal (see
%%   empty_disj/1, disjoin/3 below.

%% single true/0-pattern (base case)
do_reset(true,Pattern,PatternCopy,Result,Disj) :-
    !,
    disjunction(BranchPattern,Branch) := Disj,
    Pattern := PatternCopy,
    Result  := success(BranchPattern,Branch).

%% fail/0-pattern
do_reset((fail,_Conj),Pattern,_,Result,Disj) :-
    !,
    backtrack(Pattern,Result,Disj).

%% true/0-pattern
do_reset((true,Conj),Pattern,PatternCopy,Result,Disj) :-
    !,
    do_reset(Conj,Pattern,PatternCopy,Result,Disj).

%% conjunction pattern
do_reset(((G1,G2),Conj),Pattern,PatternCopy,Result,Disj) :-
    !,
    do_reset((G1,(G2,Conj)),Pattern,PatternCopy,Result,Disj).

%% disjunction pattern
do_reset(((G1;G2),Conj),Pattern,PatternCopy,Result,Disj) :-
    !,
    copy_term(disjunction(PatternCopy,(G2,Conj)),Branch),
    disjoin(Branch,Disj,NewDisj),
    do_reset((G1,Conj),Pattern,PatternCopy,Result,NewDisj).

%% shift/1-pattern
do_reset((shift(X),Conj),Pattern,PatternCopy,Result,Disj) :-
    !,
    Pattern := PatternCopy,
    Disj    := disjunction(BranchPattern,Branch),
    Result  := shift(X,Conj,BranchPattern,Branch).

%% (new) reset/3-pattern
do_reset((reset(P,G,R),Conj),Pattern,PatternCopy,Result,Disj) :-
    !,
    copy_term(P-G,PC-GC),
    empty_disj(D),
    do_reset((GC,true),Q,PC,S,D),
    type_to_any(R,RAny),type_to_any(S,SAny),
    do_reset((P = Q, (RAny = SAny, Conj)),Pattern,PatternCopy,Result,Disj).

%% unification pattern
do_reset((X = Y,Conj),Pattern,PatternCopy,Result,Disj) :-
    !,
    ( X = Y -> do_reset(Conj,Pattern,PatternCopy,Result,Disj)
    ; backtrack(Pattern,Result,Disj)).

%% unification pattern
do_reset((X = Y,Conj),Pattern,PatternCopy,Result,Disj) :-
    !,
    ( X = Y -> do_reset(Conj,Pattern,PatternCopy,Result,Disj)
    ; backtrack(Pattern,Result,Disj)).

%%term non-equivalence pattern
do_reset((X \== Y,Conj),Pattern,PatternCopy,Result,Disj) :-
    !,
    ( X \== Y -> do_reset(Conj,Pattern,PatternCopy,Result,Disj)
    ; backtrack(Pattern,Result,Disj)).

%% copy_term/2-pattern
do_reset((copy_term(X,Y),Conj),Pattern,PatternCopy,Result,Disj) :-
    !,
    ( copy_term(X,Y) -> do_reset(Conj,Pattern,PatternCopy,Result,Disj)
    ; backtrack(Pattern,Result,Disj)).

%% cut/0-pattern
do_reset((!,Conj),Pattern,PatternCopy,Result,Disj) :-
    !,
    do_reset(Conj,Pattern,PatternCopy,Result,Disj).

%% findall/3-pattern
do_reset((findall(T,G,L),Conj),Pattern,PatternCopy,Result,Disj) :-
    !,
    ( findall(T,G,L) -> do_reset(Conj,Pattern,PatternCopy,Result,Disj)
    ; backtrack(Pattern,Result,Disj)).

%% length/2-pattern
do_reset((length(L,X),Conj),Pattern,PatternCopy,Result,Disj) :-
    !,
    ( length(L,X) -> do_reset(Conj,Pattern,PatternCopy,Result,Disj)
    ; backtrack(Pattern,Result,Disj)).

%% member/2-pattern
do_reset((member(X,L),Conj),Pattern,PatternCopy,Result,Disj) :-
    !,
    ( member(X,L) -> do_reset(Conj,Pattern,PatternCopy,Result,Disj)
    ; backtrack(Pattern,Result,Disj)).

%% is/2-pattern
do_reset((is(R,Exp),Conj),Pattern,PatternCopy,Result,Disj) :-
    !,
    ( R is Exp -> do_reset(Conj,Pattern,PatternCopy,Result,Disj)
    ; backtrack(Pattern,Result,Disj)).

%% (<)/2-pattern
do_reset((<(X,Y),Conj),Pattern,PatternCopy,Result,Disj) :-
    !,
    ( X < Y -> do_reset(Conj,Pattern,PatternCopy,Result,Disj)
    ; backtrack(Pattern,Result,Disj)).

%% (>=)/2-pattern
do_reset((>=(X,Y),Conj),Pattern,PatternCopy,Result,Disj) :-
    !,
    ( X >= Y -> do_reset(Conj,Pattern,PatternCopy,Result,Disj)
    ; backtrack(Pattern,Result,Disj)).

%% (@<)/2-pattern
do_reset((@<(X,Y),Conj),Pattern,PatternCopy,Result,Disj) :-
    !,
    ( X @< Y -> do_reset(Conj,Pattern,PatternCopy,Result,Disj)
    ; backtrack(Pattern,Result,Disj)).

%% (@>=)/2-pattern
do_reset((@>=(X,Y),Conj),Pattern,PatternCopy,Result,Disj) :-
    !,
    ( X @>= Y -> do_reset(Conj,Pattern,PatternCopy,Result,Disj)
    ; backtrack(Pattern,Result,Disj)).

%% write/1-pattern
do_reset((write(T),Conj),Pattern,PatternCopy,Result,Disj) :-
    !,
    write(T),
    do_reset(Conj,Pattern,PatternCopy,Result,Disj).

%% clause pattern.
do_reset((G,Conj),Pattern,PatternCopy,Result,Disj) :-
    !,
    findall(GC-Body,(clause(G,Body), GC = G),Clauses),
    ( Clauses := [] ->
        backtrack(Pattern,Result,Disj)
    ; disjoin_clauses(G,Clauses,ClausesDisj),
      do_reset((ClausesDisj,Conj),Pattern,PatternCopy,Result,Disj)
    ).

:- pred disjoin_clauses(P,list(pair(P,pred)),pred).
disjoin_clauses(_,[],fail).
% The next clause is not necessary, but makes things prettier when tracing.
disjoin_clauses(G,[GC-Clause],(G=GC,Clause)) :- !.
disjoin_clauses(G,[GC-Clause|Clauses], ((G=GC,Clause) ; Disj) ) :-
    disjoin_clauses(G,Clauses,Disj).

:- pred backtrack(P,result(P),disjunction(P)).
%% backtracking
backtrack(Pattern,Result,Disj) :-
    ( empty_disj(Disj) ->
         Result := failure
    ; Disj := disjunction(PatternCopy,G) ->
         empty_disj(EmptyDisj),
	 do_reset((G,true),Pattern,PatternCopy,Result,EmptyDisj)
         %% TODO: could be more efficient if we pattern match on G?
    ).


%%-----------------------------------------------------------------------------
%% Disjunctions
%%
%% Disjunctions are of the from disjunction(Pattern,Goal), where Goal may
%% contain a disjunction.

%% Disjunctions are neither commutative nor associative. However empty_disj/1
%% is still the unit for disjoin/3.

:- type disjunction(Pattern) ---> disjunction(Pattern,pred).

%% The empty disjunction.
:- pred empty_disj(disjunction(Pattern)).
empty_disj(disjunction(_,fail)).

%% Disjoin two disjunctions.
%%
%% This operation is not commutative
:- pred disjoin(disjunction(P),disjunction(P),disjunction(P)).
disjoin(disjunction(_,fail),Disjunction,Disjunction) :- !.
disjoin(Disjunction,disjunction(_,fail),Disjunction) :- !.
disjoin(disjunction(PC1,G1),disjunction(PC2,G2),Disjunction) :-
    PC1 := PC2,
    Disjunction := disjunction(PC1, (G1 ; G2)).
% NOTE: things could probably be made more efficient by not unifying PC1
% and PC2, and keeping the disjunctions explicit, s.t. no fresh copy of
% G2 needs to be made, but this is way simpler.


%%-----------------------------------------------------------------------------
%% Results
:- type result(R) ---> success(R,pred) ; shift(any,pred,R,pred) ; failure.

\end{Verbatim}

\label{lastpage}
\end{document}